\newtheorem{theorem}{Theorem}
\begin{document}
%
\title{Exploring Auxiliary Context: Discrete Semantic Transfer  Hashing for Scalable Image Retrieval}
%
%
%

\author{Lei~Zhu,
        Zi~Huang,
        Zhihui Li,
        Liang Xie,
        Heng Tao Shen
\thanks{L. Zhu is with the School of Information Science and Engineering, Shandong Normal University, Jinan 250358, China.}
\thanks{Z. Huang are with the School of Information Technology and Electrical
Engineering, The University of Queensland, Brisbane, QLD 4072, Australia.}
\thanks{Z. Li is with  School of Computer Science and Technology, Shandong University, Jinan 250101, China.}
\thanks{L. Xie is with School of Sciences, Wuhan University of Technology, No.
122 Luoshi Road, Hongshan District, Wuhan 430070, China.}
\thanks{H. T. Shen (Corresponding author) is with the School of Computer Science and Engineering, University of Electronic
Science and Technology of China, Chengdu 611731, China (e-mail: shenhengtao@hotmail.com).}
}

%
%

\markboth{IEEE TRANSACTIONS ON NEURAL NETWORKS AND LEARNING SYSTEMS}%
{Shell \MakeLowercase{\textit{et al.}}: Bare Demo of IEEEtran.cls for IEEE Journals}
%



\maketitle

\begin{abstract}
Unsupervised hashing can desirably support scalable content-based image retrieval (SCBIR) for its appealing advantages of semantic label independence, memory and search efficiency. However, the learned hash codes are embedded with limited discriminative semantics due to the intrinsic limitation of image representation. To address the problem, in this paper, we propose a novel hashing approach, dubbed as \emph{Discrete Semantic Transfer Hashing} (DSTH). The key idea is to \emph{directly} augment the semantics of discrete image hash codes by exploring auxiliary contextual modalities. To this end, a unified hashing framework is formulated to simultaneously preserve visual similarities of images and perform semantic transfer from contextual modalities. Further, to guarantee direct semantic transfer and avoid information loss, we explicitly impose the discrete constraint, bit--uncorrelation constraint and bit-balance constraint on hash codes. A novel and effective discrete optimization method based on augmented Lagrangian multiplier is developed to iteratively solve the optimization problem. The whole learning process has linear computation complexity and desirable scalability. Experiments on three benchmark datasets demonstrate the superiority of DSTH compared with several state-of-the-art approaches.
\end{abstract}

\begin{IEEEkeywords}
Unsupervised hashing, content-based image retrieval, visual similarities, semantic transfer, discrete optimization
\end{IEEEkeywords}

%
\IEEEpeerreviewmaketitle

\section{Introduction}
\IEEEPARstart{W}{ith} the explosive growth in popularity of social networks and mobile devices, huge amounts of images are shared on the Web. There is an emerging need to retrieve relevant visual contents from such large-scale image databases with well scalability. Hence, scalable content-based image retrieval (SCBIR) has received substantial attentions over the past decades \cite{CBIRSurvey}.

\begin{figure}
\centering
\includegraphics[width=85mm]{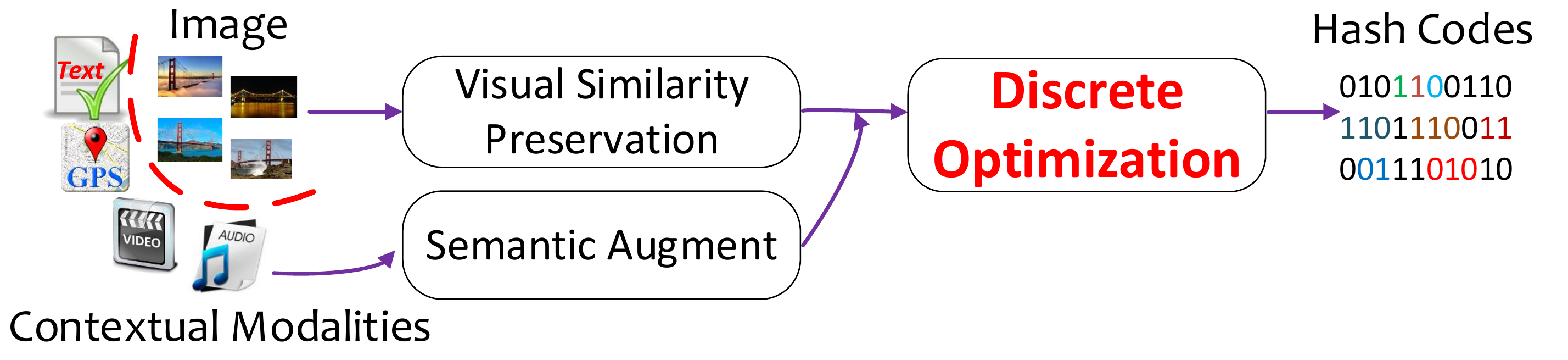}
\caption{Basic flowchart of hash code learning in DSTH.  Our approach \emph{directly} augments the semantics of discrete image hash codes with auxiliary contextual modalities.}
\label{fig:framework}
\end{figure}

Unsupervised hashing has been developed as one of the promising hashing techniques to support SCBIR \cite{SPH,SKLSH,DBLP:journals/sigpro/XieZPL16,AGH,CMFSIGMOD,ITQ,LCMH,zhutkde,zhutycb,SGH,TIP2016binary}. The key objective is to transform the high-dimensional image feature into compact binary codes with various advanced unsupervised learning techniques. By using binary codes as new representation, the memory consumption can be significantly reduced and the search process can be quickly completed with simple but efficient bit operations. Moreover, the learning process is performed without any dependence on semantic labels. Motivated by these desirable advantages, unsupervised hashing has recently received increasing attentions.

However, due to the intrinsic semantic limitation of image representation, the hash codes learned on it may suffer from limited discriminative representation capability \cite{zhuhashingmm}. How to enrich the semantics of image hash codes for SCBIR is an important but challenging task. Fortunately, the images to be retrieved by current Web search engines are generally accompanied with rich contextual modalities, such as text descriptions, GPS positions, audios, and etc \cite{zhuhashingmm,DBLP:journals/mta/XieZC16}. These resources of various modalities are noisy but easily to obtain. More importantly, they are semantically relevant to image data. It is promising to exploit them for semantic enrichment of image hash codes. Existing cross-modal hashing (CMH) (e.g. inter-media hashing (IMH) \cite{CMFSIGMOD} and linear cross-modal hashing (LCMH) \cite{LCMH}) can leverage contextual semantics. But their main objective is to discover the shared semantic space for cross-modal retrieval. Hence, the original visual information may be lost because of the mandatory heterogeneous modality correlation (validated in our experiments). Multi-modal hashing (MMH) (e.g. multiple feature hashing (MFH) \cite{MFH} and multi-view latent hashing (MVLH) \cite{MVLH}) can also enrich the semantics of hash codes. However, it requires both images and contextual modalities as query, which impedes its application for SCBIR where only visual image is provided for online retrieval.
\begin{table*}
\caption{Main differences between DSTH and representative hashing techniques. CM denotes the contextual modalities.}
\label{difference}
\centering
\begin{tabular}{|p{10mm}<{\centering}|p{19mm}<{\centering}|p{22mm}<{\centering}|p{28mm}<{\centering}|p{27mm}<{\centering}|p{32mm}<{\centering}|p{10mm}<{\centering}|}
\hline
Method & Query Modality & Learning Modality & Learning Paradigm  & Semantic Transfer & Discrete Optimization & SCBIR\\
\hline
SGH & visual & visual & unsupervised & $\times$ & $\times$ & $\surd$\\
\hline
DGH & visual & visual & unsupervised & $\times$ & $\surd$ & $\surd$  \\
\hline
SDH & visual & visual & supervised & $\surd$ & $\surd$ & $\surd$\\
\hline
CMFH & visual/CM & visual+CM & unsupervised & $\times$ & $\times$ & $\surd$ \\
\hline
CDH &  visual/CM & visual+CM & supervised & $\surd$ & $\surd$ & $\surd$ \\
\hline
MFH & visual+CM & visual+CM & unsupervised & $\surd$ & $\times$ & $\times$ \\
\hline
DSTH & visual & visual+CM & unsupervised & $\surd$ & $\surd$ & $\surd$ \\
\hline
\end{tabular}
\end{table*}

In this paper, we propose a novel hashing method, dubbed as \emph{Discrete Semantic Transfer Hashing} (DSTH). The key idea is to directly augment the semantics of discrete hash codes with auxiliary contextual modalities. To achieve this nontrivial aim, DSTH first aligns image hash codes with topic distributions of contextual modalities for semantic transfer. Then, a unified hashing learning framework is formulated to integrate semantic transfer with visual similarity preservation. These two parts interact with each other and guarantee that the valuable semantics can be transferred to image hash codes. Further, DSTH simultaneously imposes discrete constraint, bit--uncorrelation constraint, and bit-balance constraint on hash codes. It can avoid the semantic loss brought by most existing hashing methods which follow a two-step relaxing+rounding optimization framework. An efficient and effective optimization method based on augmented Lagrangian multiplier (ALM) \cite{ALM} is proposed to iteratively solve the discrete hash codes. The whole learning process has linear computation complexity and desirable scalability. Figure \ref{fig:framework} illustrates the basic process of hash code learning in DSTH. It is worthwhile to highlight the main contributions of this paper as follows:
\begin{enumerate}
 \item DSTH exploits the auxiliary contextual modalities to directly augment the semantics of discrete image hash codes. It can support image retrieval where only visual query is provided. To the best of our knowledge, there does not exist any similar work. \vspace{1mm}
 \item To ensure direct semantic transfer and avoid information loss, DSTH explicitly deals with discrete constraint, bit--uncorrelation constraint, and bit-balance constraint together. A novel and efficient optimization approach based on augmented Lagrangian multiplier is developed to directly learn discrete hash codes. The learning process has linear computation complexity and desirable scalability.\vspace{1mm}
 \item Extensive experiments demonstrate the state-of-the-art performance of DSTH, and also validate the effects of semantic transfer and discrete optimization.\vspace{1mm}
\end{enumerate}

The rest of the paper is structured as follows. Section \ref{sec:2} reviews the related work.
Details about the proposed methodology are presented in Section \ref{sec:3}. In Section \ref{sec:4}, we introduce the experiments.
Section \ref{sec:5} concludes the paper.

\section{Related Work}
\label{sec:2}
Hashing is a quite hot research topic in recent literatures on image indexing. Various approaches are developed in this research field. For the limited space here, only the most related works of this paper are reviewed in this section. For more comprehensive introduction, please refer to \cite{hashingsurveytwo}.

\subsection{Data-independent Hashing}
Locality-sensitive hashing (LSH) \cite{SKLSH} and its extensions are typical data-independent hashing methods. They generate binary codes via random projection. For example, the hash functions of LSH are constructed with the random vectors from a standard Gaussian distribution. As their whole hash code generation process is performed without considering any semantics of underlying image data, data-independent hashing methods generally require more hashing bits and tables to achieve a satisfactory performance. It will result in longer search time and significant storage cost. To enrich the hash codes with semantics, advanced machine learning techniques are applied for hashing. With the trend, various data-dependent hashing methods (supervised and unsupervised hashing) are proposed to capture the data characteristics and embed them into binary hash codes.

\subsection{Supervised Hashing}
Supervised hashing learns hash codes with explicit semantic labels. Via supervised learning, discriminative capability of hash codes can be enhanced by mining semantics in explicit labels. Typical examples include kernel-based supervised hashing (KSH) \cite{SKH}, semantic correlation maximization (SCM) \cite{SCM}, semantics-preserving hashing (SePH) \cite{SePH}, linear subspace ranking hashing (LSRH) \cite{LSRH}, and deep learning hashing (DLH) \cite{DHCB,shen2018tpami}. In KSH, Hamming distances between hash codes of similar data pairs are minimized and that of dissimilar data pairs are maximized simultaneously. SCM solves the training time complexity of supervised multimodal hashing methods by avoiding explicit pairwise similarity matrix computing. SePH transforms the supervised semantic affinities of training data into a probability distribution and approximates it with hash codes in Hamming space. LSRH is a typical ranking-based cross-modal hashing with supervised learning. It considers a new class of hash functions that are closely related to rank correlation measures. DLH directly projects original images to binary hash codes via multiple hierarchical nonlinear transformation in a deep neural network \cite{Alexnet}. Principally, supervised hashing can indeed achieve better performance than unsupervised hashing. However, they require large amounts of high-quality semantic labels to achieve satisfactory performance. This requirement unfortunately limits the retrieval scalability of hashing in practical image retrieval, where high-quality semantic labels are hard and expensive to obtain.

\subsection{Unsupervised Hashing}
Unsupervised hashing generates hash codes without any semantic labels. It has a better scalability. According to the exploited modalities, it can be further categorized into three sub-categories: unsupervised uni-modal hashing, unsupervised cross-modal hashing, and unsupervised multi-modal hashing.

\textbf{Unsupervised Uni-modal Hashing}. Its learning process only relies on discriminative information in single modality. For image retrieval, only visual information is considered. Typical examples include: spectral hashing (SPH) \cite{SPH}, anchor graph hashing (AGH) \cite{AGH}, iterative quantization (ITQ) \cite{ITQ}, scalable graph hashing (SGH) \cite{SGH}, discrete proximal linearized minimization (DPLM) \cite{TIP2016binary}, latent semantic minimal hashing (LSMH) \cite{LSMH}, and Deepbit \cite{deepbit}. SPH preserves the image similarities into the projected hash codes with spectral graph. To reduce the training complexity of graph hashing, AGH approximates the image relations with a low-rank matrix, based on which hash functions are learned by binarizing the Nystorm eigen-functions \cite{SPH}. ITQ minimizes the quantization loss brought by dimension reduction based binary embedding. SGH applies feature transformation to solve large-scale graph hashing. DPLM reformulates the unsupervised discrete hashing learning problem as minimizing the sum of a smooth loss term. The transformed problem can be efficiently solved with an iterative procedure where each iteration admits an analytical discrete solution. LSMH combines minimum encoding and matrix decomposition to learn the hash codes based on the refined feature representation. Deepbit is one of the pioneering unsupervised deep hashing methods, where three criterions are enforced to learn hash codes from the top layer of the designed neural network. As the learning process is independent with semantic labels, unsupervised uni-modal hashing has well scalability when they process large-scale data. However, image features have the intrinsic limitation on representing high-level semantics. Hence, hash codes learned by this kind of method will inevitably suffer from semantic shortage. This disadvantage limits the retrieval performance of unsupervised uni-modal hashing.

\textbf{Unsupervised Cross-modal Hashing}. This method can exploit contextual modalities to learn hash code supporting the retrieval tasks across different modalities \cite{DBLP:conf/aaai/XieSZ16}. Cross-view hashing (CVH) \cite{CMFIJCAI} extends spectral hashing into cross-modal retrieval by minimizing similarity-weighed Hamming distance of the learned codes. Inter-media hashing (IMH) \cite{CMFSIGMOD} simultaneously preserves inter-modality similarities and correlates heterogeneous modalities to learn cross-modal hash codes. Linear cross-modal hashing (LCMH) \cite{LCMH} further reduces training complexity of IMH by representing training samples as their distances to centroids of sample clusters. Collective matrix factorization hashing (CMFH) \cite{TIPCMFH} seeks to detect the shared latent structures of heterogeneous modalities with collective matrix factorization \cite{Singh:2008:RLV:1401890.1401969} for generating cross-modal hash codes. These cross-modal hashing methods can enhance the descriptive capability of image hash codes with shared semantic learning. However, the main objective of cross-modal hashing is discovering the shared hash codes in heterogeneous modalities to achieve cross-modal retrieval. The valuable semantics in original visual features may unfortunately lost as a result of mandatory heterogeneous modality correlation. For specific task of SCBIR, the lost visual information may deteriorate the image search performance.

\textbf{Unsupervised Multi-modal Hashing}. Motivated by the success of multiple feature fusion on enhancing the performance \cite{DBLP:journals/tcyb/ZhuSJZX15,DBLP:journals/tmm/ZhuSJXZ15,DBLP:journals/tcyb/ChangMYZH17,DBLP:journals/tnn/ChangY17,DBLP:journals/pami/ChangYYX17}, this method integrates semantics of contextual modalities and visual information into a unified hash code. Composite hashing with multiple information source (CHMIS) \cite{CHMIS} is one of the pioneering approaches. It simultaneously adjusts the weights of modalities to maximize the coding performance, and learns hash functions for fast query binary transformation. Multi-view spectral hashing (MVSH) \cite{MVSH} extends spectral hashing into a multi-view setting. Its key idea is to sequentially learn the integrated hash codes by solving the successive maximization of local variances. In \cite{6638233}, an efficient multi-view anchor graph hashing (MVAGH) is proposed to learn the nonlinear integrated hash codes computed from the eigenvectors of an averaged similarity matrix. The training complexity of multi-view hashing is reduced with a low-rank from of the averaged similarity matrix calculated based on multi-view anchor graphs. Multi-view alignment hashing (MVAH) \cite{7006770} learns the relaxed hashing representation with a regularized kernel nonnegative matrix factorization, and hash functions via multivariable logistic regression. Multi-view latent hashing (MVLH) \cite{MVLH} incorporates multi-modal data into hash code learning. In MVLH, the hash codes are determined as the latent factors shared by multiple views from an unified kernel feature space. Compared with uni-modal hashing methods, multi-modal hashing can generate more discriminative hash codes. However, it needs all the involved modalities as input \cite{zhuijcai,7984879}. This requirement cannot be satisfied as only visual image is provided in SCBIR.

\subsection{Discrete Hashing}
Most existing hashing approaches exploit a two-step relaxing+rounding to solve hash codes. In these methods, the relaxed hash codes (continuous values) are first learned and further quantized into binary codes via thresholding. This solution, as indicated in recent literature \cite{DGH,SDH}, may lead to significant information loss. To address the problem, several approaches are proposed to directly solve hash codes within one step. Discrete graph hashing (DGH) \cite{DGH} aims to directly preserve the data similarity in a discrete Hamming space. It reformulates the unsupervised graph hashing with a discrete optimization framework and solves two subproblems via a tractable alternating maximization. Supervised discrete hashing (SDH) \cite{SDH} learns the supervised discrete hash codes with the optimal linear classification performance. SDH transforms this learning objective into sub-problems that can admit an analytical solution. A cyclic coordinate descent algorithm is applied to calculate discrete hashing bits in a closed form. Coordinate discrete hashing (CDH) \cite{CDOECVR} is designed for supervised cross-modal hashing, and its discrete optimization proceeds in a block coordinate descent manner. In each iterative learning step, a hash bit is sequentially updated while others clamped. CDH transforms the sub-problem into an equivalent and tractable
quadratic form, so that hash codes can be directly solved with active set based optimization. Column sampling based discrete supervised hashing (COSDISH) \cite{CSBDSH} and kernel-based supervised discrete hashing (KSDH) \cite{Shi2016} are also developed for supervised hashing. COSDISH operates in an iterative manner. In each iteration, several columns are first sampled from the semantic similarity matrix. Then, hash code is decomposed into two parts, so that it can be alternately optimized. KSDH solves discrete hash codes via an asymmetric relaxation strategy that preserves the discrete constraint and reduces the accumulated quantization errors. Although these approaches can achieve certain success, their proposed discrete optimization solutions are specially designed for particular hashing types (supervised hashing, unsupervised graph hashing, and etc.). Therefore, they cannot be directly applied to handle our problem.

\subsection{Key Differences between Our Approach and Existing Works}
Our work is an advocate of discrete hashing optimization but focuses on the problem of exploiting contextual modalities to directly augment the semantics of discrete hash codes. Moreover, our hashing optimization strategy can not only explicitly deal with the discrete constraint of binary codes, but also consider the bit--uncorrelation constraint and bit-balance constraint together\footnote{DPLM can cope with bit-uncorrelation and bit-balance constraints. However, it simply transfers two constraints to the objective function and avoids to directly solve the problem.}. The whole learning
process has linear computation complexity and desirable scalability. The proposed approach can well support SCBIR. The main differences between DSTH and existing hashing techniques are summarized in Table \ref{difference}.

\begin{table}
\caption{Summary of main notations.}
\vspace{-3mm}
\label{symtable}
\centering
\begin{tabular}{|p{12mm}<{\centering}|p{67mm}|}
\hline
\textbf{Symbols} & \textbf{Explanations} \\
\hline
\hline
$X$ & feature representations of images \\
\hline
$Y$ & feature representations of contextual texts\\
\hline
$Z$ & hash codes of images\\
\hline
$H$ & projection matrix in hash functions\\
\hline
$d_x$ & feature dimension of image representation \\
\hline
$d_y$ & feature dimension of contextual text representation \\
\hline
$N$ & number of database images \\
\hline
$F$ & group of hash functions \\
\hline
$K$ & number of anchors \\
\hline
$L$ & hash code length \\
\hline
$S_x$ & affinity matrix of visual graph \\
\hline
$L_x$ & Laplacian matrix of visual graph \\
\hline
$V_x$ & data-to-anchor mapping matrix\\
\hline
$U$ & basis matrix of visual feature space\\
\hline
$T$ & latent image semantic topics\\
\hline
$W$ & semantic transfer matrix\\
\hline
$A_x, A_y, B$ & auxiliary variables\\
\hline
$E_x,E_y,E_z$ & measure the difference between the target and auxiliary variables\\
\hline
\end{tabular}
\end{table}

\section{The Proposed Methodology}
\label{sec:3}
In this subsection, we detail the proposed methodology. First, we introduce the relevant notations used in this paper and the problem definition. Then, we formulate the overall objective function and present an efficient discrete solution. Finally, we analyse the convergence and time complexity of the proposed iterative optimization method.

\subsection{Notations and Problem Definition}
In this paper, we explore to exploit semantics from contextual texts for semantic transfer of discrete hash codes. Note that our approach can be easily extended when more contextual modalities are exploited. We define $X=[x_1,...,x_N]\in \mathbb{R}^{d_x\times N}$ and $Y=[y_1,...,y_N]\in \mathbb{R}^{d_y\times N}$ as feature representations of images and the contextual texts respectively, $d_x$ and $d_y$ denote their corresponding feature dimensions, and $N$ is the number of images. The objective of DSTH is to learn $Z=[z_1, z_2,...,z_N]\in \mathbb{R}^{L\times N}$, where $z_n=[z_{1n}, z_{2n},$ $...,z_{Ln}]^{\texttt{T}}\in \mathbb{R}^{L\times 1}$ are the hash codes of the $n_{th}$ image, $L$ is hash code length. To generate hash codes for query images, DSTH learns a group of hash functions $F$, each of them defines a mapping: $\mathbb{R}^{d_x}\mapsto \{0,1\}$.  Main notations used in the paper are listed in Table \ref{symtable}.

\subsection{Objective Formulation}
The formulated objective is composed of two parts: visual similarity preservation and semantic transfer. Visual similarity preservation preserves visual correlation of images into hash codes. Semantic transfer part discovers the potential semantics from contextual texts and transfers them into discrete hash codes.

\textbf{Visual Similarity Preservation}. SCBIR retrieves similar images for the query \cite{DBLP:conf/mm/ZhuSX15}. Hence, the objective of hashing for SCBIR is visual similarity preservation. It indicates that similar images should be mapped to binary codes with short Hamming distances.
In this paper, we seek to minimize the weighted Hamming distance of hash codes.
\begin{equation}
\begin{aligned}
\label{eq:sh}
\min_{\{z_i\}_{i=1}^N} \sum_{i=1}^N \sum_{j=1}^N S_x(i,j) ||z_i-z_j||_F^2 \Rightarrow \min_{Z} \ Tr(ZL_xZ^{\texttt{T}})\\
\end{aligned}
\end{equation}

\noindent where $L_x=D_x-S_x$ is the Laplacian matrix of visual graph, $S_x\in \mathbb{R}^{N\times N}$ characterizes the affinity similarities of images, $D_x=S_x\textbf{1}=I$, $\textbf{1}$ is column vector with ones, and $Tr(\cdot)$ is trace operator,  $||\cdot||_F$ is Frobenius norm. The design principle of the Eq.(\ref{eq:sh}) is to incur a heavy penalty if two similar images are projected far apart.

Explicitly computing $L_x$ will consume $O(N^2)$, which is not scalable for large-scale image retrieval. In this paper, we exploit anchors to reduce the computation complexity. Similar to \cite{AGH}, we approximate the affinity matrix $S_x$ with $S_x=V_x\Lambda V_x^\texttt{T}$, where $\Lambda=\texttt{diag}(V_x^\texttt{T}\textbf{1})$. $V_x=[v(x_1),..,v(x_{N})]^\texttt{T}$, $v(x)$ is data-to-anchor mapping
\begin{equation*}
\begin{aligned}
v(x)=\frac{[\delta_1\texttt{exp}(\frac{-||x-r_1||_2^2)}{\sigma}), ..., \delta_K \texttt{exp}(\frac{-||x-r_K||^2_2}{\sigma})]^\texttt{T}}{{\sum_{k=1}^K \delta_k \texttt{exp}(\frac{-||x-r_k||_2^2}{\sigma})}}
\end{aligned}
\end{equation*}
\noindent $r_1,...,r_K$ are $K$ anchors obtained by \emph{k}-means, $\delta_k$ is set to 1 if $r_k$ belongs to the $s$ closest exemplars of $x$, and 0 vice versa, $\sigma>0$ is the bandwidth parameter. Accordingly, $L_x$ can be represented as $I-V_x\Lambda V_x^\texttt{T}$. As shown in the subsequent discrete hashing learning, keeping this low-rank form decomposition will avoid explicit Laplacian matrix computation, and reduce the computation complexity of optimization.

\textbf{Semantic Transfer}. Images in the modern searching engines are generally associated with rich textual descriptions, such as tags, image captions and user comments. These images and the accompanied texts belong to heterogeneous modalities but may be highly correlated with each other. Moreover, contextual texts contain explicit semantics which are complementary to the latent image semantics. Hence, it is promising to exploit contextual modalities for the semantic enrichment of discrete image hash codes. To this end, in this paper, we first adopt matrix factorization to detect the latent semantic structure $T$ of image. Its formulation is $\min_{U, T} \ ||X-UT||_F^2$,
where $U\in \mathbb{R}^{d_x\times L}$ is basis matrix of visual feature space, $T\in \mathbb{R}^{L\times N}$ represents latent image semantic topics. Then, for semantic transfer, we align latent image semantics $T$ to explicit textual semantic distribution $Y$
\begin{equation}
\begin{aligned}
\label{eq:td}
\min_{W, T} \ & ||WT-Y||_F^2\\
\end{aligned}
\end{equation}
\noindent where $W\in \mathbb{R}^{d_y\times L}$ is the transfer matrix. With transfer, the detected $T$ can involve the explicit semantics of contextual text. In hashing learning, we directly force hash codes $Z$ to match the distribution of $T$. This design is reasonable because the hash codes can be understood as semantic topic distribution, if we consider each hashing bit as a latent semantic topic.

\textbf{Imposing Constraints}. In our formulation, we explicitly consider three constraints on hash codes to ensure direct semantic transfer and avoid information quantization loss. $Z\in \{-1, 1\}^{L\times N}$ is discrete constraint. It guarantees any hash code to be $-1$ or 1. Via simple transformation $(Z+1)/2$, $Z$ will be binary code (0 or 1). With binary codes as image representation, the search process can be significantly accelerated and the storage cost of image database can be greatly reduced. The bit-uncorrelation constraint $ZZ^\texttt{T}=NI$ is to guarantee the learned hashing bits to be uncorrelated. It can reduce the information redundancy of different hash bits. $Z\textbf{1}=0$ is the bit-balance constraint, it requires each bit to occur in database with equal chance ($50\%$). This constraint forces the learned hash code to contain the largest information.

\textbf{Overall Formulation}. After comprehensively considering visual similarity preservation, semantic transfer and constraints to be imposed, we obtain the overall objective function of DSTH. The formulation is
\begin{equation}
\begin{aligned}
\label{eq:objective}
\min_{Z, U, W} \ & ||X-UZ||_F^2+\beta ||WZ-Y||_F^2 + \alpha Tr(Z(I-V_x\Lambda V_x^\texttt{T})Z^{\texttt{T}})\\
s.t. \ & \underbrace{Z\in \{-1, 1\}^{L\times N}}_{discrete}, \underbrace{ZZ^\texttt{T}=NI}_{bit-uncorrelation}, \underbrace{Z\textbf{1}=0}_{bit-balance}
\end{aligned}
\end{equation}
\noindent where $\alpha, \beta>0$ balances the regularization terms. We jointly consider visual similarity preservation and semantic transfer, so that visual similarity preservation can guide the semantic extraction and determine which part of semantics to transfer.


\subsection{Discrete Optimization}
Solving Eq.(\ref{eq:objective}) is essentially a non-trivial combinatorial optimization problem for three challenging constraints. Most existing hashing approaches apply relaxing+rounding optimization \cite{hashingsurveytwo}. They first relax discrete constraint to calculate continuous values, and then binarize them to hash codes via rounding. This two-step learning can simplify the solving process, but it may cause significant information loss. In recent literature, several discrete hashing solutions are proposed. However, they are developed for particular hashing types and formulations. For example, graph hashing \cite{DGH}, supervised hashing \cite{SDH,CSBDSH}, cross-modal hashing \cite{CDOECVR}. Therefore, their learning approaches cannot be directly applied to solve our problem.

In this paper, we propose a new and effective optimization algorithm based on augmented Lagrangian multiplier (ALM) \cite{ALM}.
Our idea is to introduce auxiliary variables to separate constraints, and transform the objective function to an equivalent one that is tractable.
Formally, we introduce three auxiliary variables
$A_x, A_y, B$, and set $A_x=X-UZ, A_y=Y-WZ, B=Z$. Eq.(\ref{eq:objective}) is reformulated as
\begin{equation}
\begin{aligned}
\label{eq:tj}
\min_{Z,U,W} & ||A_x||_F^2+||A_y||_F^2+ \frac{\mu}{2}(||X-UZ -A_x+\frac{E_x}{\mu}||_F^2  + \\
&\beta ||Y-WZ-A_y+\frac{E_y}{\mu}||_F^2) + \alpha Tr(Z(I-V_x\Lambda V_x^\texttt{T})B^{\texttt{T}}) \\
&+ \frac{\mu}{2} ||Z-B+\frac{E_z}{\mu}||_F^2 \\
& s.t. \quad B\in \{-1, 1\}^{L\times N}, ZZ^\texttt{T}=NI, Z\textbf{1}=0
\end{aligned}
\end{equation}
\noindent where $E_x\in \mathbb{R}^{d_x\times N}$, $E_y\in \mathbb{R}^{d_y\times N}$, $E_z\in \mathbb{R}^{L\times N}$ measure the difference between the target and auxiliary variables, $\mu>0$ adjusts the balance between terms.

We can adopt alternate optimization to iteratively solve Eq.(\ref{eq:tj}). Specifically, we optimize the objective function with respective to one variable while fixing the other remaining variables. The iteration steps are detailed as follows.

\textbf{Update $A_x, A_y$.} By fixing other variables, the optimization formulas for $A_x, A_y$ are
\begin{equation}
\begin{aligned}
\label{eq:tk}
&\min_{A_x} \ ||A_x||_F^2+ \frac{\mu}{2}||X-UZ-A_x + \frac{E_x}{\mu}||_F^2 \\
&\min_{A_y} \ ||A_y||_F^2+ \frac{\mu}{2}||Y-WZ-A_y + \frac{E_y}{\mu}||_F^2 \\
\end{aligned}
\end{equation}
\noindent By calculating the derivative of the objective function with respective to $A_x, A_y$, and setting it to 0, we can obtain that
\begin{equation}
\begin{aligned}
\label{eq:tl}
A_x=\frac{\mu X-\mu UZ+E_x}{2+\mu}, A_y=\frac{\mu Y-\mu WZ+E_y}{2+\mu}
\end{aligned}
\end{equation}

\textbf{Update $U,W$}. By fixing other variables, the optimization formula for $U$ is
\begin{equation}
\begin{aligned}
\label{eq:tn}
\min_{U} \ & ||X-UZ-A_x + \frac{E_x}{\mu}||_F^2 \\
\end{aligned}
\end{equation}
\noindent By calculating the derivative of the objective function with respective to $U$, and setting it to 0, we can obtain that
\begin{equation}
\begin{aligned}
UZ = X-A_x+\frac{E_x}{\mu} \\
\end{aligned}
\end{equation}

\noindent Since $ZZ^\texttt{T}=NI$, we can further derive that
\begin{equation}
\begin{aligned}
\label{eq:tp}
U = \frac{1}{N}(X-A_x+\frac{E_x}{\mu})Z^\texttt{T}
\end{aligned}
\end{equation}
Similarly, we can obtain $W = \frac{1}{N}(Y-A_y+\frac{E_y}{\mu})Z^\texttt{T}$.

\textbf{Update $B$}. By fixing other variables, the optimization formula for $B$ is
\begin{equation}
\begin{aligned}
\label{eq:tq}
\min_{B} \ & \alpha Tr(Z(I-V_x\Lambda V_x^\texttt{T})B^\texttt{T})+\frac{\mu}{2}||Z-B+\frac{E_z}{\mu}||_F^2\\
s.t. \ & B\in \{-1, 1\}^{L\times N}\\
\end{aligned}
\end{equation}
\noindent The objective function in Eq.(\ref{eq:tq}) can be simplified as
\begin{equation}
\begin{aligned}
\label{eq:tre}
\min_{B} &  \ Tr((\frac{\alpha}{\mu}Z(I-V_x\Lambda V_x^\texttt{T})-Z-\frac{E_z}{\mu})B^\texttt{T})\\
= \min_{B} & \ ||B-(Z+\frac{E_z}{\mu}-\frac{\alpha}{\mu}Z(I-V_x\Lambda V_x^\texttt{T}))||_F^2\\
\end{aligned}
\end{equation}
The discrete solution of $B$ can be directly represented as
\begin{equation}
\begin{aligned}
\label{eq:tz}
B=\texttt{Sgn}(Z+\frac{E_z}{\mu}-\frac{\alpha}{\mu}ZI + \frac{\alpha}{\mu}ZV_x\Lambda V_x^\texttt{T})
\end{aligned}
\end{equation}
\noindent where $\texttt{Sgn}(\cdot)$ is signum function which returns -1 if $x<0$, 1 if $x\geq 0$.

\textbf{Update $Z$}. By fixing other variables, the optimization formula for $Z$ is
\begin{equation}
\begin{aligned}
\label{eq:tr}
& \min_{Z} \  \frac{\mu}{2}(||X-UZ-A_x+\frac{E_x}{\mu}||_F^2 + \beta ||Y-WZ-A_y \\
& +\frac{E_y}{\mu}||_F^2)+\alpha Tr(Z(I-V_x\Lambda V_x^\texttt{T})B^{\texttt{T}})+\frac{\mu}{2} ||Z-B+\frac{E_z}{\mu}||_F^2 \\
& s.t. \  ZZ^\texttt{T}=NI, Z\textbf{1}=0
\end{aligned}
\end{equation}
\noindent The objective function in Eq.(\ref{eq:tr}) can be transformed as
\begin{equation}
\begin{aligned}
\label{eq:trrr}
& \min_{Z} \   -\mu Tr(Z^\texttt{T}U^\texttt{T}(X-A_x+\frac{E_x}{\mu}))- \mu\beta Tr(Z^\texttt{T}W^\texttt{T}(Y-A_y \\
& + \frac{E_y}{\mu})) + \alpha Tr(Z^\texttt{T}B(I-V_x\Lambda V_x^\texttt{T}))-\mu Tr(Z^\texttt{T}(B-\frac{E_z}{\mu}))\\
& = \min_{Z} \ -Tr(Z^\texttt{T}C)
\end{aligned}
\end{equation}

\noindent where $C=B-\frac{E_z}{\mu}-\frac{\alpha}{\mu}BI+\frac{\alpha}{\mu}BV_x\Lambda V_x^\texttt{T}+U^\texttt{T}(X-A_x+
\frac{E_x}{\mu})+\beta W^\texttt{T}(Y-A_y+\frac{E_y}{\mu})$.
Eq.(\ref{eq:tr}) is equivalent to the following maximization problem
\begin{equation}
\begin{aligned}
\label{eq:ts}
& \max_{Z} \ Tr(Z^\texttt{T}C) \\
 s.t. & \ ZZ^\texttt{T}=NI, Z\textbf{1}=0
\end{aligned}
\end{equation}
By mathematically solving the above equation with singular value decomposition (SVD) \cite{wall2003svd}, $C$ can be decomposed as $C=P\Theta Q^\texttt{T}$,
where the columns of $P$ and $Q$ are left-singular vectors and right-singular vectors of $C$ respectively, $\Theta$ is rectangular diagonal matrix and its diagonal entries are singular values of $C$. Then, the optimizing for $Z$ becomes $\max_{Z} \ Tr(Z^\texttt{T}P\Theta Q^\texttt{T}) \Leftrightarrow \max_{Z} \ Tr(\Theta Q^\texttt{T}Z^\texttt{T}P)$.
\begin{theorem}\label{calzthre}
Given a matrix $G$ which meets $GG^\texttt{T}=NI$ and diagonal matrix $\Theta \ge 0$, the solution of $\max_{G} Tr(\Theta G)$ is $\texttt{diag}(\sqrt{N})$.
\end{theorem}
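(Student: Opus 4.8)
The plan is to strip this matrix optimization down to a one-line scalar estimate. First I would rewrite the objective in coordinates: since $\Theta=\texttt{diag}(\theta_1,\dots,\theta_L)$ with every $\theta_i\ge 0$ by hypothesis, the objective collapses to a nonnegatively weighted sum of the diagonal entries of $G$, namely $Tr(\Theta G)=\sum_{i}\theta_i\,G_{ii}$. This reduces the problem to bounding the diagonal entries $G_{ii}$ subject to the constraint $GG^\texttt{T}=NI$.

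Next I would read the key inequality off the constraint: the $i$-th diagonal entry of $GG^\texttt{T}$ is $\sum_{j}G_{ij}^{2}=N$, so $G_{ii}^{2}\le\sum_{j}G_{ij}^{2}=N$ and hence $G_{ii}\le\sqrt{N}$ for every $i$. Because each weight $\theta_i\ge 0$, summing yields the $G$-independent upper bound $Tr(\Theta G)=\sum_i\theta_i G_{ii}\le\sqrt{N}\sum_i\theta_i=\sqrt{N}\,Tr(\Theta)$. Finally I would verify that $G=\texttt{diag}(\sqrt{N})=\sqrt{N}\,I$ is feasible ($GG^\texttt{T}=NI$) and attains this bound ($Tr(\Theta G)=\sqrt{N}\,Tr(\Theta)$), so it solves the maximization; and tracing the equality conditions back, when $\Theta>0$ the identity $G_{ii}=\sqrt N$ together with $\sum_j G_{ij}^{2}=N$ forces every off-diagonal entry of row $i$ to vanish, so $\sqrt{N}\,I$ is then the unique maximizer.

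I do not expect a genuine obstacle — the argument is essentially a row-norm (Cauchy--Schwarz) bound plus one explicit witness. The only points needing care are bookkeeping: matching the transpose convention of the preceding derivation (there the substitution $G=Q^\texttt{T}Z^\texttt{T}P$ gives $G^\texttt{T}G=NI$ rather than $GG^\texttt{T}=NI$, but the estimate is symmetric under transposition, so it carries over verbatim by applying it to the columns of $G$), and reading $\texttt{diag}(\sqrt N)$ as the suitably shaped, possibly rectangular, matrix with $\sqrt N$ on its main diagonal. Substituting the optimal $G$ back through $Z=PG^\texttt{T}Q^\texttt{T}$ then recovers the closed form $Z=\sqrt{N}\,PQ^\texttt{T}$ used for the hash-code update.
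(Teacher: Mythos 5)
Your proposal is correct and follows essentially the same route as the paper's proof: expand $Tr(\Theta G)=\sum_i\theta_{ii}g_{ii}$, bound each diagonal entry by $\sqrt{N}$ from the constraint $GG^\texttt{T}=NI$, and identify $\texttt{diag}(\sqrt{N})$ as the equality case. You merely spell out details the paper leaves implicit (the row-norm reason why $g_{ii}\le\sqrt{N}$, the role of strict positivity of $\Theta$ for uniqueness, and the transpose/rectangular bookkeeping in the application), which only strengthens the argument.
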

\begin{proof}
Let us assume $\theta_{ii}$ and $g_{ii}$ are the $i_{th}$ diagonal entry of $\Theta$ and $G$ respectively, $Tr(\Theta G)=\sum_{i}\theta_{ii}g_{ii}$. Since $GG^\texttt{T}=NI$, $g_{ii}\leq \sqrt{N}$. $Tr(\Theta G)=\sum_{i}\theta_{ii}g_{ii}\leq \sqrt{N}\sum_{i}\theta_{ii}$. The equality holds only when $g_{ii}=\sqrt{N}, g_{ij}=0, \forall i,j$. $Tr(\Theta G)$ achieves its maximum when $G=\texttt{diag}(\sqrt{N})$.
\end{proof}

$\Theta \ge 0$ as $\Theta$ is calculated by SVD. On the other hand, we can easily derive that $Q^\texttt{T}Z^\texttt{T}PP^\texttt{T}ZQ=NI$. Therefore, according to the \textbf{Theorem \ref{calzthre}}, the optimal $Z$ can only be obtained when $Q^\texttt{T}Z^\texttt{T}P=\texttt{diag}(\sqrt{N})$. Hence, the solution of $Z$ is
\begin{equation}
\begin{aligned}
\label{eq:solutiony}
Z=\sqrt{N}PQ^\texttt{T}
\end{aligned}
\end{equation}
Moreover, in order to satisfy the bit-balance constraint $Z\textbf{1}=0$, we apply Gram-Schmidt process as \cite{DGH} and construct matrices $\hat{P}$ and $\hat{Q}$, so that $\hat{P}^\texttt{T}\hat{P}=I_{L-R}$, $[P, 1]^\texttt{T}\hat{P}=0$, $\hat{Q}^\texttt{T}\hat{Q}=I_{L-R}$, $Q\hat{Q}^\texttt{T}=0$, $R$ is the rank of $C$. The close form solution for
$Z$ is
\begin{equation}
\begin{aligned}
Z=\sqrt{N}[P, \hat{P}][Q, \hat{Q}]^\texttt{T}
\end{aligned}
\end{equation}

\textbf{Update $E_x$, $E_y$, $E_z$, $\mu$}. The update rules are ($\rho>1$ is learning rate that controls the convergence.)
\begin{equation}
\begin{aligned}
\label{eq:tu}
&E_x=E_x+\mu(X-UZ-A_x)\\
&E_y=E_y+\mu(Y-WZ-A_y)\\
&E_z=E_z+\mu(Z-B), \mu=\rho\mu \\
\end{aligned}
\end{equation}

\textbf{Convergence}. At each iteration, the updating of variables will monotonically decreases towards the lower bound of objective function in Eq.(\ref{eq:objective}) . As indicated by ALM optimization theory \cite{lin2010augmented}, the iterations will make the optimization converge. Further, our empirical experiments on standard benchmarks also validate the convergence  of the proposed method.

\textbf{Hash Function Learning}. In this paper, we leverage linear projection to construct hash functions for its high online efficiency. The objective is to minimize the loss between the hash codes and the projected ones. The formulation is $\min_{H} ||Z-H^\texttt{T}X||_F^2 + \eta ||H||_F$, where $H\in \mathbb{R}^{d_x\times L}$ denotes the projection matrix. The optimal $H$ can be calculated as $H=\left(XX^\texttt{T}+\eta I\right)^{-1} XZ^\texttt{T}$. Hash functions can be constructed as $F(x)=\frac{\texttt{sgn}(H^\texttt{T}x)+1}{2}$. It should be noted that, since DSTH is a two-stage hashing framework, this hash function learning part can be substituted by other models, such as, linear SVM \cite{STH}, kernel logistic regression \cite{SePH}, decision tree \cite{DecisionTree}, and neural network \cite{7006770}.

\begin{table*}
\caption{mAP of all approaches on three datasets. The best result in each column is marked with bold.}
\label{resulttable}
\centering
\begin{tabular}{|p{16mm}<{\centering}|p{8mm}<{\centering}|p{8mm}<{\centering}|p{8mm}<{\centering}|p{11mm}<{\centering}
|p{8mm}<{\centering}|p{8mm}<{\centering}|p{8mm}<{\centering}|p{11mm}<{\centering}|p{8mm}<{\centering}|p{8mm}<{\centering}
|p{8mm}<{\centering}|p{11mm}<{\centering}|}
\hline
\multirow{2}{*}{Methods} & \multicolumn{4}{c|}{\emph{\textbf{Wiki}}} & \multicolumn{4}{c|}{\emph{\textbf{MIR Flickr}}} & \multicolumn{4}{c|}{\emph{\textbf{NUS-WIDE}}}\\
\cline{2-13}
& 16 & 32 & 64  & 128  & 16 & 32 & 64  & 128 & 16 & 32 & 64 & 128 \\
\hline
SKLSH & 0.1440 & 0.1509 & 0.1562 & 0.1609 & 0.5624 & 0.5902 & 0.6023 & 0.6253 & 0.3685 & 0.3645 & 0.3684 & 0.3688 \\
\hline
SPH & 0.1625 & 0.1583 & 0.1628 & 0.1714 & 0.5976 & 0.6093 & 0.6269 & 0.6433 & 0.3430 & 0.3910 & 0.4415 & 0.4582 \\
\hline
ITQ & 0.1717 & 0.1780 & 0.1789 & 0.1810 & 0.6194 & 0.6339 & 0.6532 & 0.6612 & 0.4522 & 0.4676 & 0.4857 & 0.4911 \\
\hline
SGH & 0.1758 & 0.1754 & 0.1786 & 0.1838 & 0.6309 & 0.6465 & 0.6503 & 0.6586 & 0.4759 & 0.4842 & 0.4818 & 0.4862 \\
\hline
DPLM & 0.1609 & 0.1776 & 0.1807 & 0.1835 & 0.6099 & 0.6261 & 0.6350 & 0.6422 & 0.4518 & 0.4744 & 0.4810 & 0.4865 \\
\hline
LSMH & 0.1738 & 0.1780 & 0.1788 & 0.1872 & 0.6369 & 0.6300 & 0.6483 & 0.6602 & 0.4643 & 0.4722 & 0.4892 & 0.4877 \\
\hline
\hline
CVH & 0.1676 & 0.1630 & 0.1601 & 0.1757 & 0.6026 & 0.6010 & 0.6094 & 0.6229 & 0.4447 & 0.4300 & 0.4233 & 0.4148 \\
\hline
CHMIS & 0.1507 & 0.1671 & 0.1640 & 0.1787 & 0.5628 & 0.5643 & 0.5643 & 0.5768 & 0.4404 & 0.4394 & 0.4341 & 0.4259 \\
\hline
IMH & 0.1663 & 0.1709 & 0.1742 & 0.1775 & 0.6285 & 0.6338 & 0.6454 & 0.6586 & 0.4475 & 0.4619 & 0.4634 & 0.4890 \\
\hline
LCMH & 0.1752 & 0.1784 & 0.1837 & 0.1809 & 0.6250 & 0.6339 & 0.6346 & 0.6349 & 0.4641 & 0.4726 & 0.4764 & 0.4777 \\
\hline
CMFH & 0.1678 & 0.1688 & 0.1705 & 0.1738 & 0.5846 & 0.6000 & 0.5956 & 0.6106 & 0.4703 & 0.4893 & 0.4972 & 0.4888 \\
\hline
DSTH & \textbf{0.2055} & \textbf{0.2012} & \textbf{0.2041} & \textbf{0.2040} & \textbf{0.6458} & \textbf{0.6603} & \textbf{0.6642} & \textbf{0.6692} & \textbf{0.5074} & \textbf{0.5089} & \textbf{0.5208}  & \textbf{0.5251}\\
\hline
\end{tabular}
\vspace{-2mm}
\end{table*}
\textbf{Complexity Analysis}.
The anchor graph construction includes anchor generation and distance computation between images and anchors. The time complexity of this process is $O(NKd_x)$. Solving discrete hash codes is conducted in an iterative process, the computational complexity is $O(\#iter(d_x N+d_yN + d_xL + d_yL + LN))$, where $\#iter$ denotes the number of iterations. Given $N\gg d_x(d_y)>L$, this process scales linearly with $N$. The computation of hash functions solves a linear system, whose time complexity is $O(N)$. Calculating hash codes of database images costs $O(N)$. Therefore, the whole offline learning consumes $O(N)$, which indicates the desirable scalability of the proposed DSTH. In online retrieval, generating hash codes for a query can be completed in $O((d_x+1)L)$.
\begin{figure*}
\centering
\mbox{
\subfigure{\includegraphics[width=42mm]{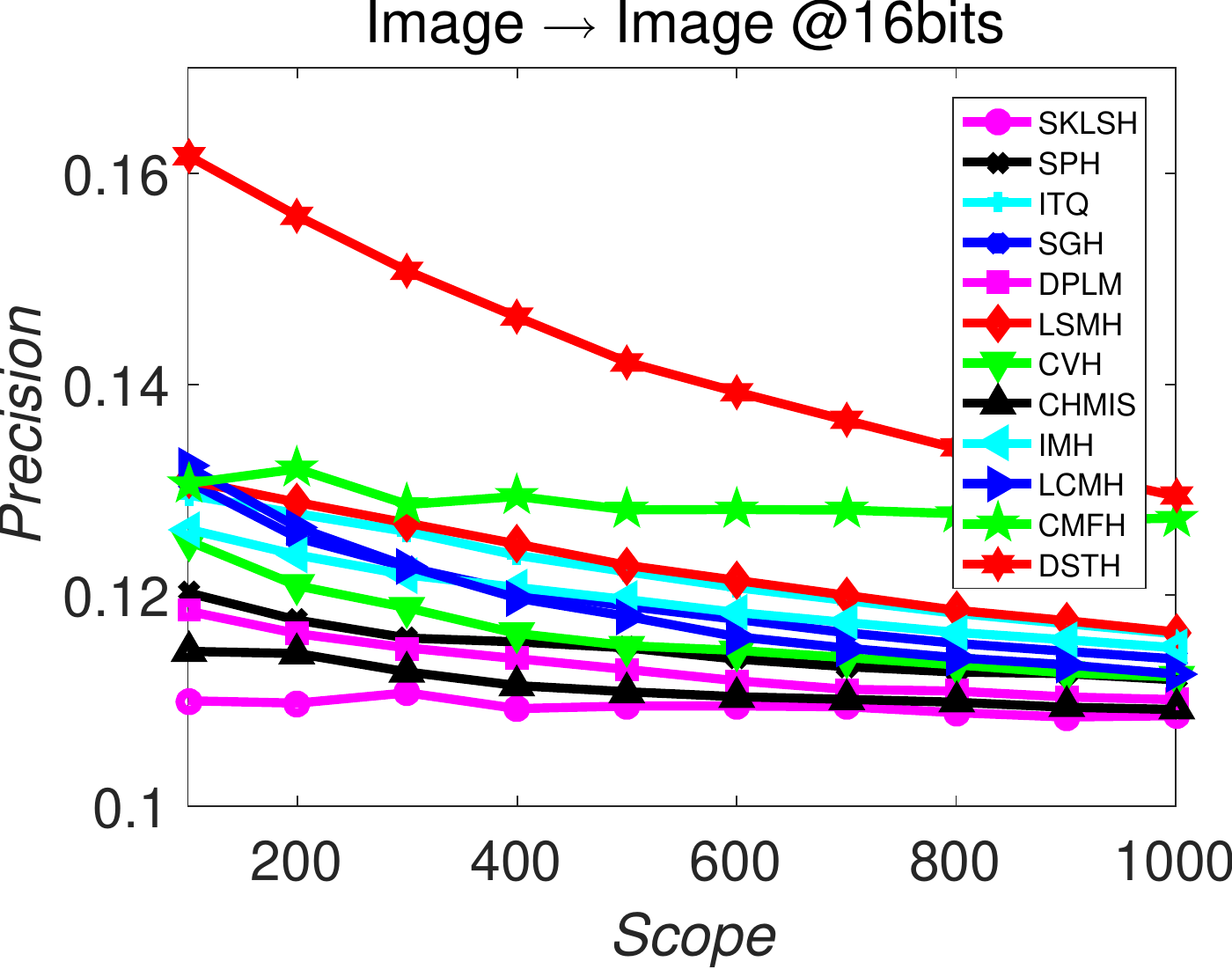}}
}\mbox{
\subfigure{\includegraphics[width=42mm]{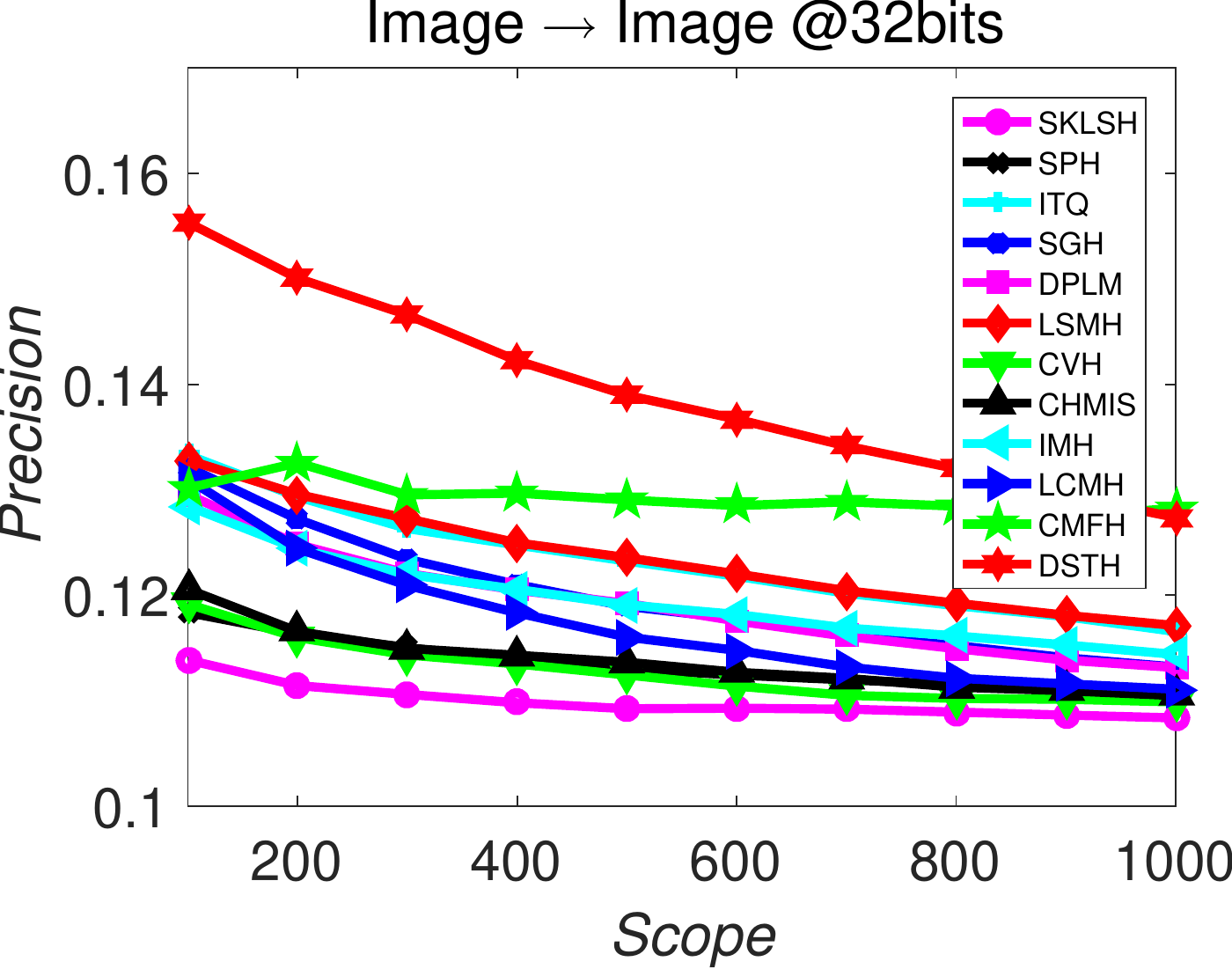}}
}\mbox{
\subfigure{\includegraphics[width=42mm]{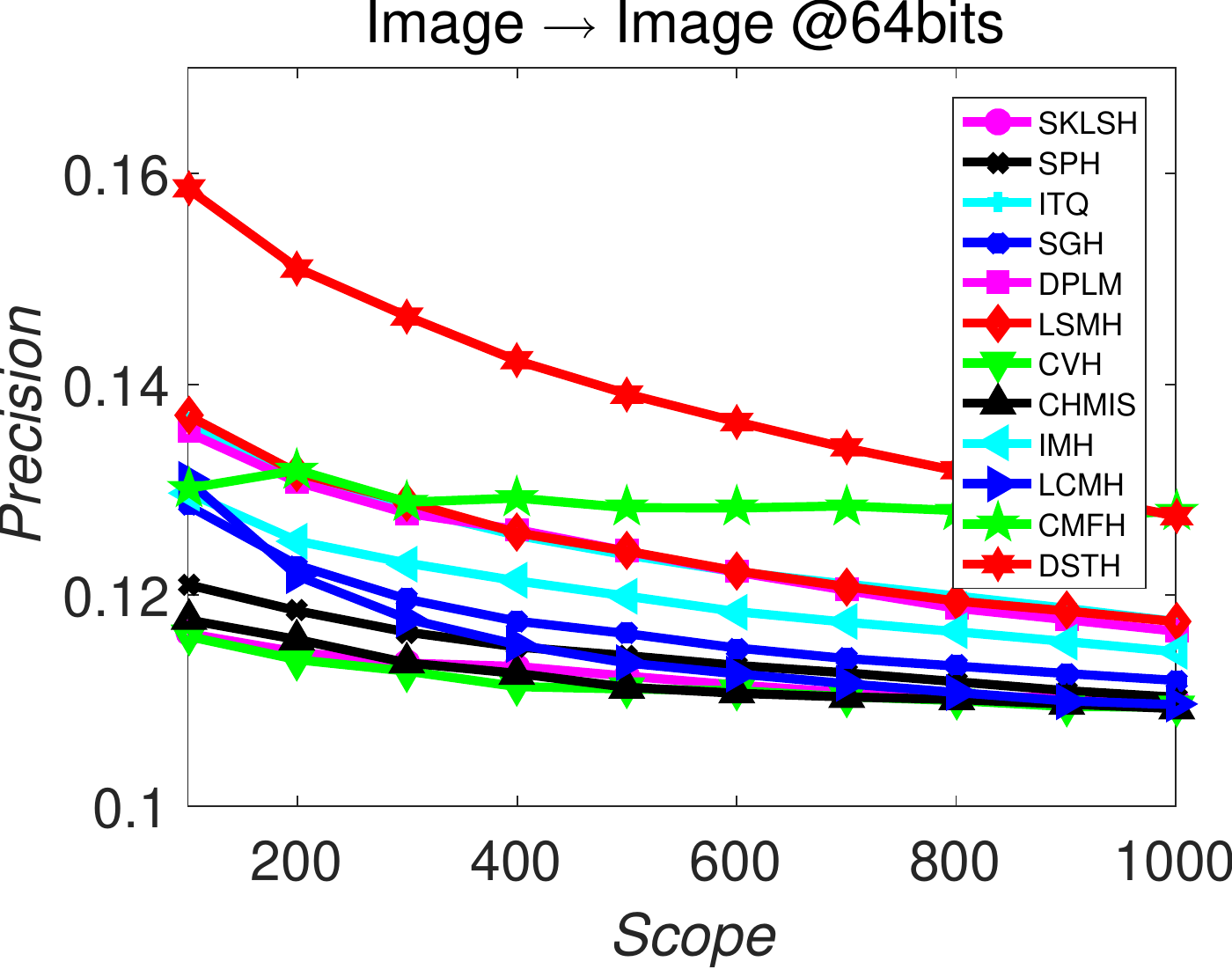}}
}\mbox{
\subfigure{\includegraphics[width=42mm]{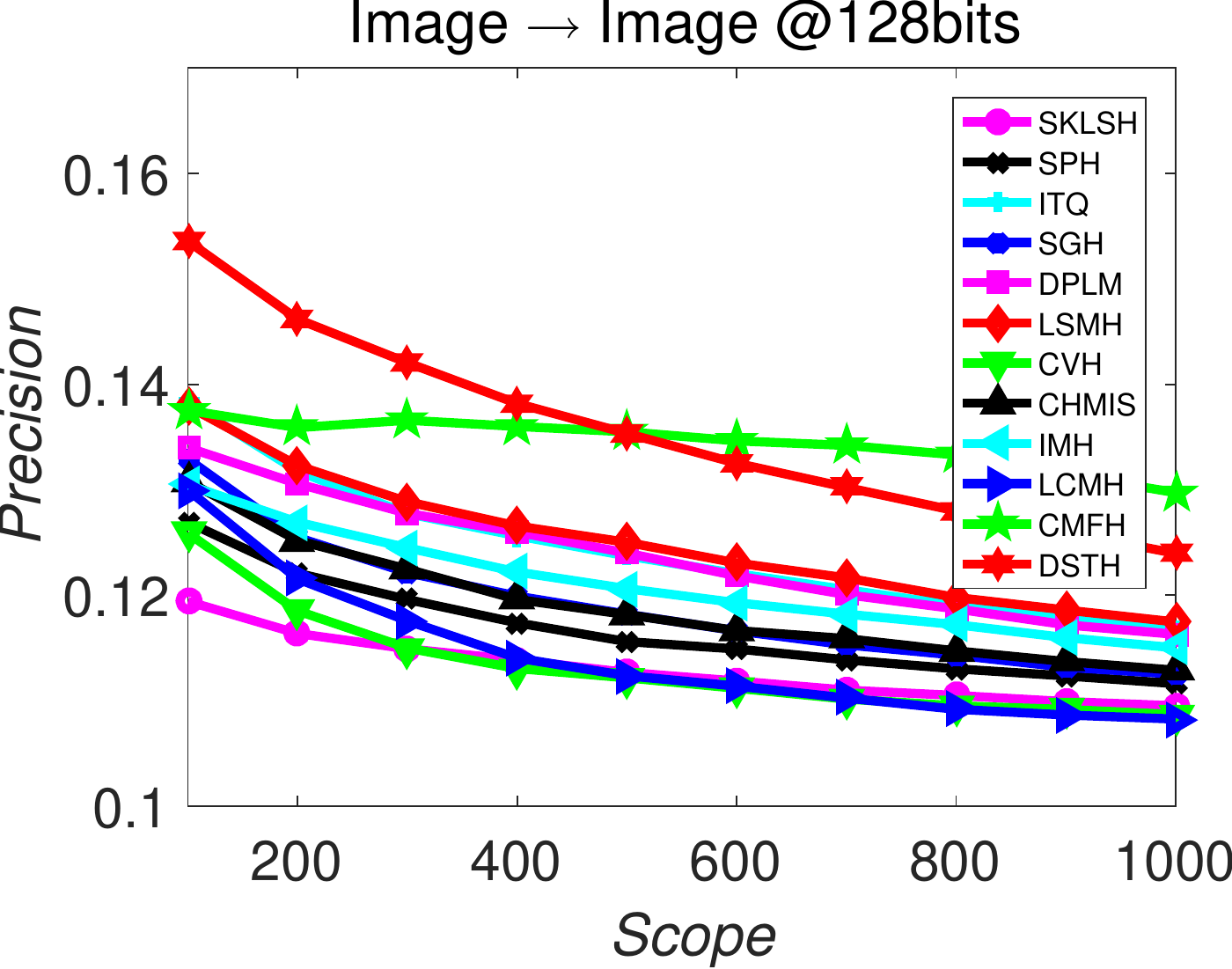}}
}
\caption{\emph{Precision-Scope} curves on \textbf{Wiki}.}
\label{fig_wiki}
\end{figure*}

\begin{figure*}
\centering
\mbox{
\subfigure{\includegraphics[width=42mm]{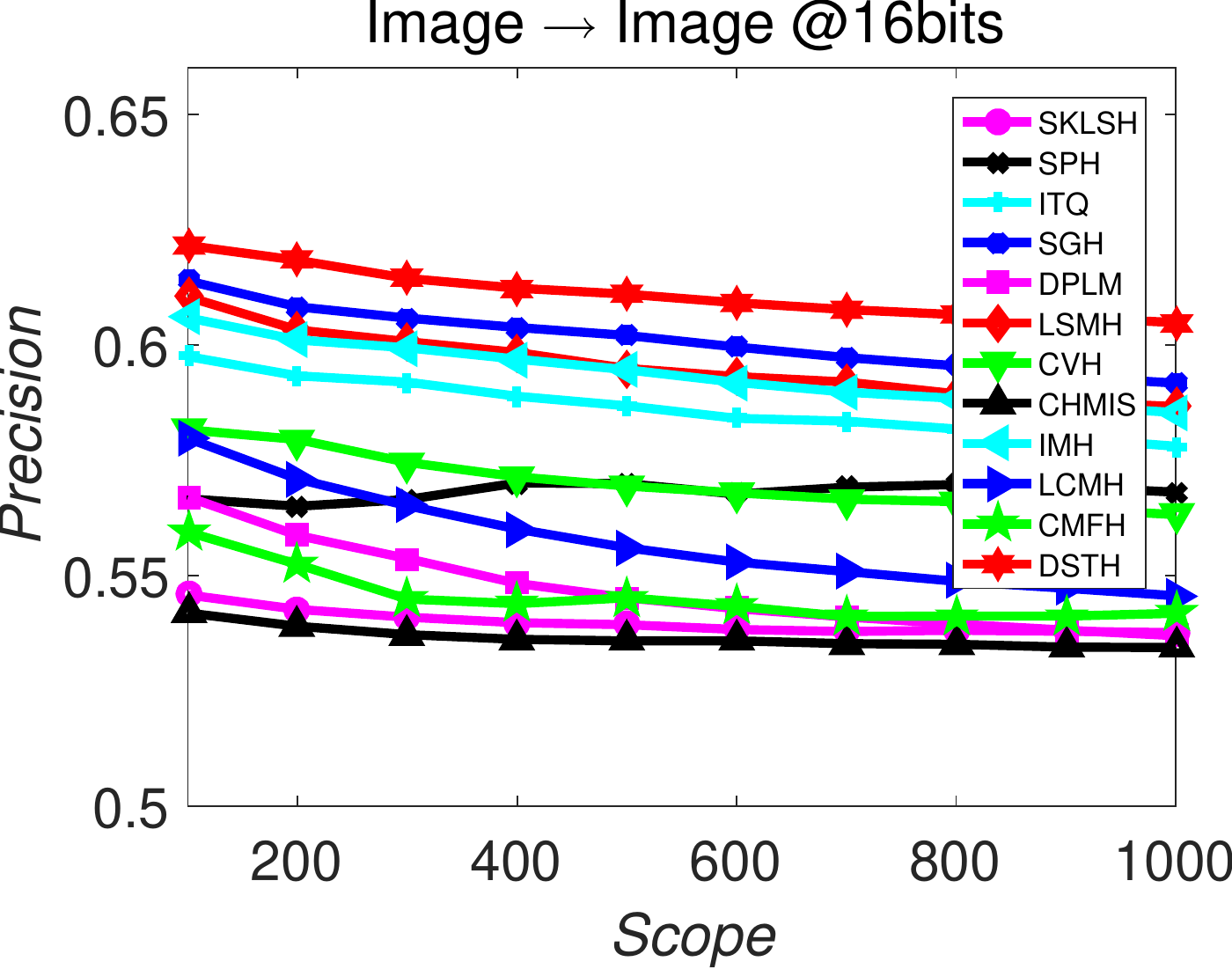}}
}\mbox{
\subfigure{\includegraphics[width=42mm]{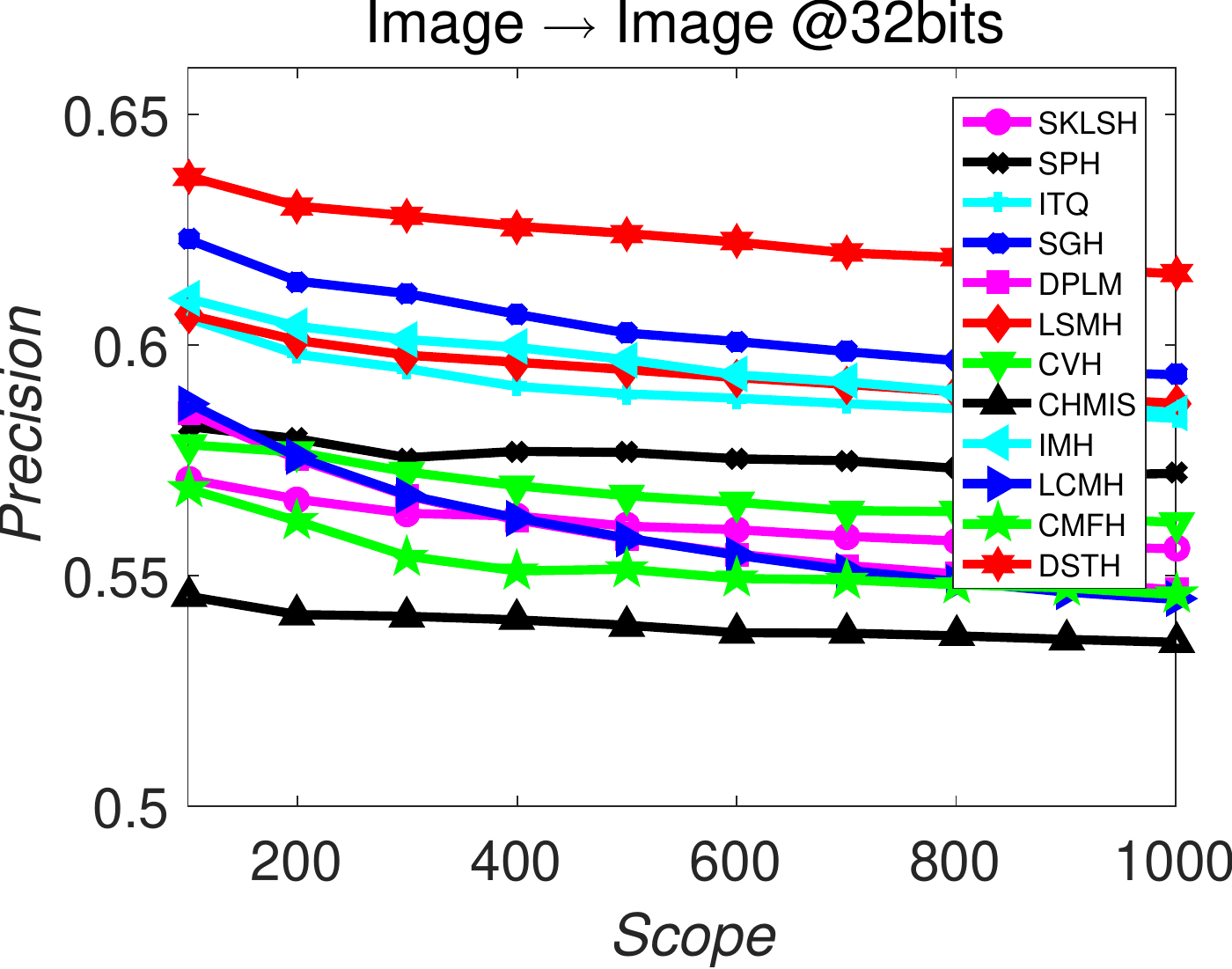}}
}\mbox{
\subfigure{\includegraphics[width=42mm]{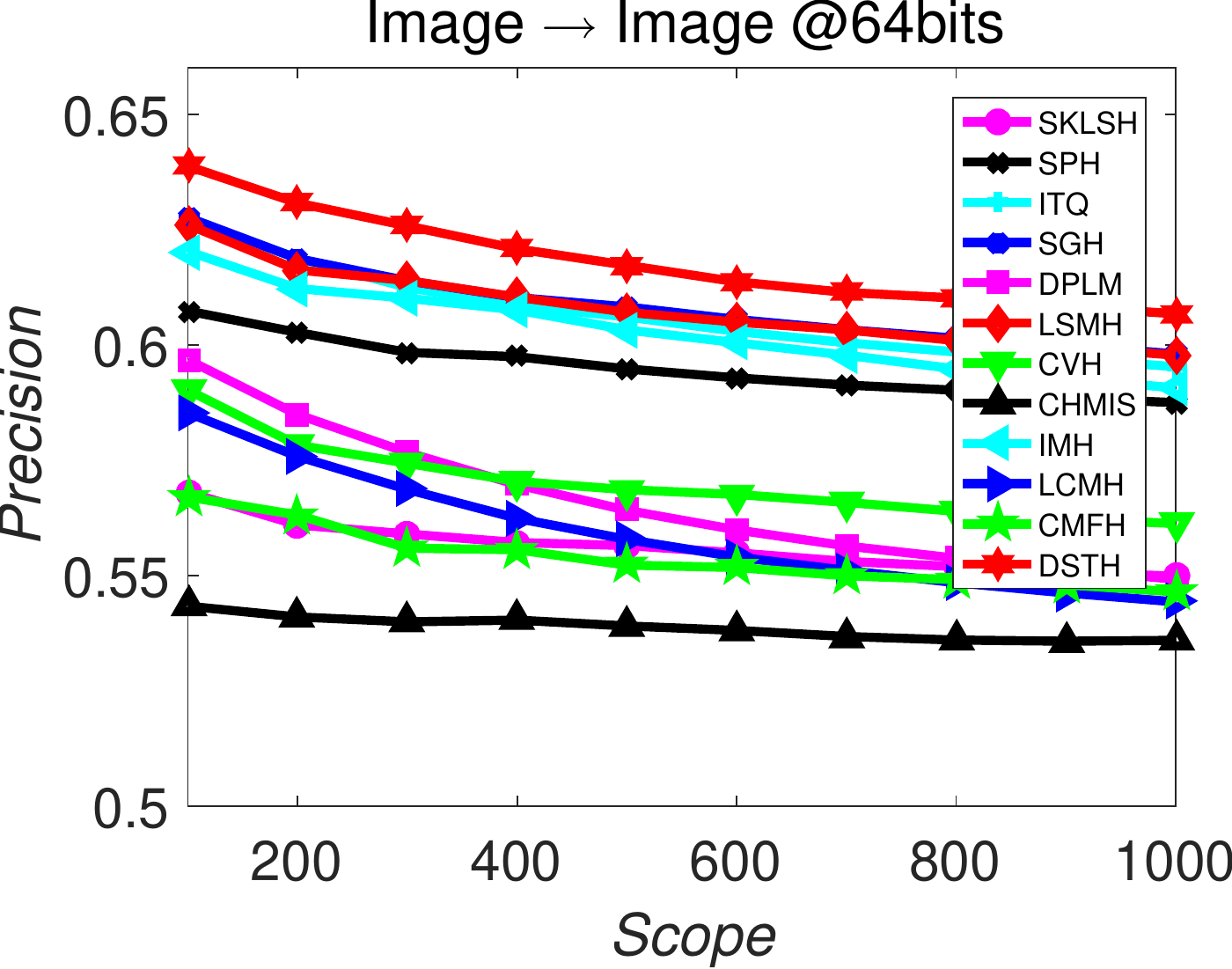}}
}\mbox{
\subfigure{\includegraphics[width=42mm]{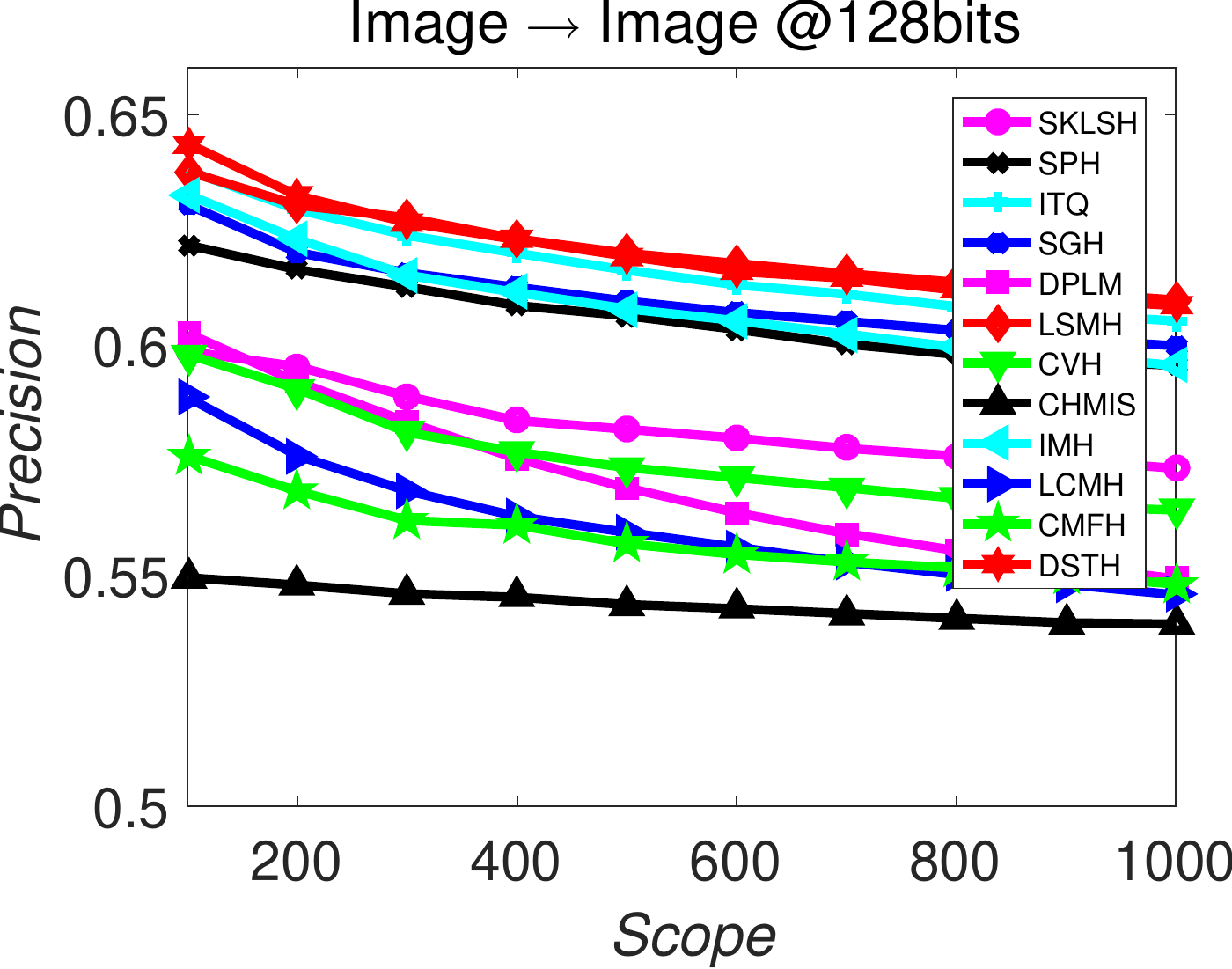}}
}
\caption{\emph{Precision-Scope} curves on \textbf{MIR Flickr}.}
\label{fig_mirflickr}
\vspace{-4mm}
\end{figure*}
\section{Experiments}
\label{sec:4}
In this section, we first introduce the experimental settings, including experimental dataset, evaluation metric and baselines.
Then, we present the comparison results with state-of-the-art approaches. Next, we evaluate the effects of
discrete optimization and semantic transfer. Finally, we give experimental results on convergence and parameter sensitiveness.
\begin{table}
\caption{Statistics of test collections (BoVW \cite{Sivic03} and BoW \cite{DBLP:journals/iet-ipr/ZhuJZF14} denote Bag-of-Visual-Words and Bag-of-Words respectively. LDA is Latent Dirichlet Allocation \cite{JMLRLDA})}
{
\centering
\begin{tabular}{|m{20mm}<{\centering}|m{14mm}<{\centering}|m{16mm}<{\centering}|m{17mm}<{\centering}|}
\hline
Datasets & \emph{\textbf{Wiki}} & \emph{\textbf{MIR Flickr}} & \emph{\textbf{NUS-WIDE}}\\
\hline
\#Database & 2,866 & 25,000 & 186,643\\
\hline
\#Query & 574 & 250 & 1,867 \\
\hline
\#Training & 2292 & 750 & 5,540\\
\hline
Visual Feature & BoVW\newline(128-D) & BoVW\newline(1000-D) & BoVW\newline(500-D)\\
\hline
Text Feature & BoW+LDA\newline(10-D) & BoW\newline(457-D) & BoW\newline(1000-D)\\
\hline
\end{tabular}}
\label{sdd}
\end{table}
\subsection{Experimental Dataset}
Experiments are conducted on three publicly available image datasets:
\textbf{Wiki}\footnote{http://www.svcl.ucsd.edu/projects/crossmodal/} \cite{MMWIKI}, \textbf{MIR Flickr}\footnote{http://lear.inrialpes.fr/people/guillaumin/data.php} \cite{MIRFLICKR} , and \textbf{NUS-WIDE}\footnote{http://lms.comp.nus.edu.sg/research/NUS-WIDE.htm} \cite{CIVRNUS}. All these datasets
are comprised of images and their contextual texts.
\begin{itemize}
  \item \textbf{Wiki} is composed of 2,866 multimedia documents which belong to 10 semantic categories. All the data are collected from Wikipedia\footnote{https://www.wikipedia.org/}. Each document contains an image and at least 70 textual words. In experiments, we represent visual contents with 128 dimensional SIFT histogram \cite{IJCVSIFT} and contextual text contents by 10 dimensional topic vector generated by latent Dirichlet allocation (LDA) \cite{JMLRLDA}.
 \item \textbf{MIR Flickr} consists of 25,000 images describing 38 semantic categories. This dataset is downloaded from the Flickr with its public API\footnote{https://www.flickr.com/services/api/}. Each image is associated with tags. The tags that appear less than 50 times are removed, and we finally obtain a vocabulary of 457 tags. In this work, visual contents of images in \textbf{\textbf{MIR Flickr}} are represented by 1000 dimensional dense SIFT histogram. Contents of contextual texts are represented by 457 dimensional binary vector, where each dimension indicates the presence of a tag.
  \item \textbf{NUS-WIDE} is composed of 269,648 images labelled by 81 concepts. Each image is also associated with tags. In experiments, we preserve 10 most common concepts and the corresponding 186,643 pairs. On \textbf{NUS-WIDE} dataset, we extract  500 dimensional SIFT histogram to describe the visual contents of images, and 1000 dimensional binary vector to represent the contextual texts.
\end{itemize}

Table \ref{sdd} summarizes the key statistics of
the test collections. For \textbf{Wiki}, as images are labelled by 10 independent categories,
images in this dataset are considered to be relevant only if they belong to the same category.
For \textbf{MIR Flickr} and \textbf{NUS-WIDE}, images are labelled by several tags,
and therefore images are considered to be relevant if they share at least one tag.

\begin{table*}
\caption{Effects of discrete optimization and semantic transfer. DSTH-I denotes the variant approach that removes discrete constraint. It adopts conventional two-step relaxing+rounding optimization to solve hash codes. DSTH-II denotes the variant approach that removes bit-balance constraint. DSTH-III denotes the variant approach that removes bit−uncorrelation constraint. DSTH-IV denotes the approach which only considers visual similarity preservation.}
\label{discretop}
\centering
\begin{tabular}{|p{13.5mm}<{\centering}|p{8.5mm}<{\centering}|p{8.5mm}<{\centering}|p{8.5mm}<{\centering}|p{10mm}<{\centering}
|p{8.5mm}<{\centering}|p{8.5mm}<{\centering}|p{8.5mm}<{\centering}|p{10mm}<{\centering}|p{8.5mm}<{\centering}|p{8.5mm}<{\centering}
|p{8.5mm}<{\centering}|p{10mm}<{\centering}|}
\hline
\multirow{2}{*}{Methods} & \multicolumn{4}{c|}{\emph{\textbf{Wiki}}} & \multicolumn{4}{c|}{\emph{\textbf{MIR Flickr}}} & \multicolumn{4}{c|}{\emph{\textbf{NUS-WIDE}}}\\
\cline{2-13}
& 16 & 32 & 64  & 128  & 16 & 32 & 64  & 128 & 16 & 32 & 64 & 128 \\
\hline
DSTH-I & 0.1675 & 0.1740 & 0.1725 & 0.1719 & 0.6326 & 0.6343 & 0.6554 & 0.6604 & 0.4963 & 0.5075 & 0.5150 & 0.5181 \\
\hline
DSTH-II & 0.2018 & 0.1997 & 0.2037 & 0.2018 & 0.6425 & 0.6450 & 0.6610 & 0.6634 & 0.4844 & 0.5015 & 0.5133 & 0.5231 \\
\hline
DSTH-III & 0.1981 & 0.1984 & 0.1924 & 0.1972 & 0.6349 & 0.6234 & 0.6540 & 0.6616 & 0.4685 & 0.4784 & 0.4922 & 0.4991 \\
\hline
DSTH-IV & 0.1998 & 0.1939 & 0.2001 & 0.1979 & 0.6301 & 0.6362 & 0.6323 & 0.6246 & 0.4765 & 0.4744 & 0.4650 & 0.4512 \\
\hline
DSTH & \textbf{0.2055} & \textbf{0.2012} & \textbf{0.2041} & \textbf{0.2040} & \textbf{0.6458} & \textbf{0.6603} & \textbf{0.6642} & \textbf{0.6692} & \textbf{0.5074} & \textbf{0.5089} & \textbf{0.5208}  & \textbf{0.5251}\\
\hline
\end{tabular}
\end{table*}
\begin{table}
\caption{Performance variations of DSTH with the training size on \textbf{NUS-WIDE} when hash code length is 128.}
\label{trainingsize}
\centering
\begin{tabular}{|p{19mm}<{\centering}|p{8mm}<{\centering}|p{8mm}<{\centering}|p{8mm}<{\centering}|p{8mm}<{\centering}|p{8mm}<{\centering}|}
\hline
\#Training & 0.5K & 1.5K & 2K & 2.5K & 3K \\
\hline
mAP & 0.5006 & 0.5010 & 0.5106 & 0.5126 & 0.5202\\
\hline
\#Training & 3.5K & 4K & 4.5K & 5K & 5540 \\
\hline
mAP & 0.5131 & 0.5183 & 0.5241 & 0.5256 & 0.5251\\
\hline
\end{tabular}
\end{table}

\subsection{Evaluation Metric}
In experiments, mean average precision (mAP) \cite{TIPCMFH,SGH} is adopted as the evaluation metric.
For a given query, average precision (AP) is calculated as
\begin{equation}
\begin{aligned}
AP=\frac{1}{NR}\sum_{r=1}^R Pre(r)\delta(r)
\end{aligned}
\end{equation}
where $R$ is the total number of retrieved images, $NR$ is the number of relevant images in the retrieved set, $Pre(r)$ denotes the precision of top $r$ retrieval images, which is defined as the ratio between the number of the relevant images and the number of retrieved images $r$, and $\delta(r)$ is indicator function which equals to 1 if the $r_{th}$ image is relevant to query, and vice versa. In experiments, we set the total number of retrieved images as 100 to report experimental results. Furthermore,
\emph{Precision-Scope} curve is also plotted to illustrate the retrieval performance variations with respect to the number of retrieved images.
\begin{figure}
\centering
\includegraphics[width=85mm]{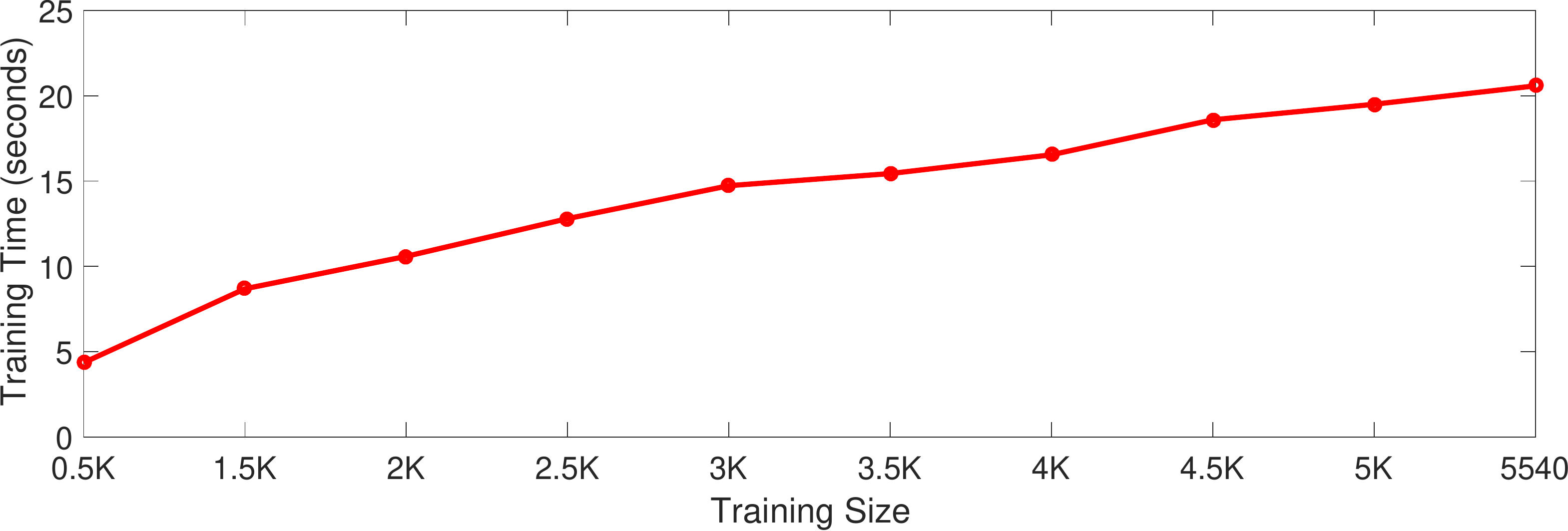}
\caption{Variations of training time with training data size.}
\label{fig:trainsize}
\end{figure}
\subsection{Evaluation Baselines}
We compare DSTH with several state-of-the-art uni-modal hashing approaches, which can be used to support SCBIR\footnote{Multi-modal hashing methods \cite{CHMIS,MFH,MVLH} are not used for comparison because they need both images and texts as query. Their retrieval scenarios are different from SCBIR.}. They include:
\begin{enumerate}[1.]
  \item \textbf{Shift-invariant kernel locality sensitive hashing} (\textbf{SKLSH}) \cite{SKLSH}. It is a representative data-independent hashing, which generates hash codes by random projections with distribution-free encoding.
  \item \textbf{Spectral hashing} (\textbf{SPH}) \cite{SPH}. Hash codes are computed by eigenvalue decomposition on visual Laplacian matrix. Hash functions are constructed with an efficient Nystrom method.
  \item \textbf{Iterative quantization} (\textbf{ITQ}) \cite{ITQ}. ITQ first learns relaxed hash codes with principal component analysis (PCA) \cite{jolliffe2002principal}. Then, it generates the hash codes by minimizing the quantization errors with optimal iterative rotation.
  \item \textbf{Scalable graph hashing} (\textbf{SGH}) \cite{SGH}. SGH leverages feature transformation to approximate the visual graph, and thus avoids explicit similarity graph computing. In this method, hash functions are learned in a bit-wise manner with a sequential learning.
  \item \textbf{Latent semantic minimal hashing} (\textbf{LSMH}) \cite{LSMH}. Minimum encoding and matrix factorization are combined together to simultaneously learn latent semantic feature which refines original features, and hash codes.
  \item \textbf{Discrete proximal linearized minimization} (\textbf{DPLM}) \cite{TIP2016binary}. We use unsupervised setting of DPLM for comparison. This method directly handles with discrete constraint. The hash codes are solved by iterative procedures with each iteration admitting an analytical solution.
\end{enumerate}
\begin{figure*}
\centering
\mbox{
\subfigure{\includegraphics[width=42mm]{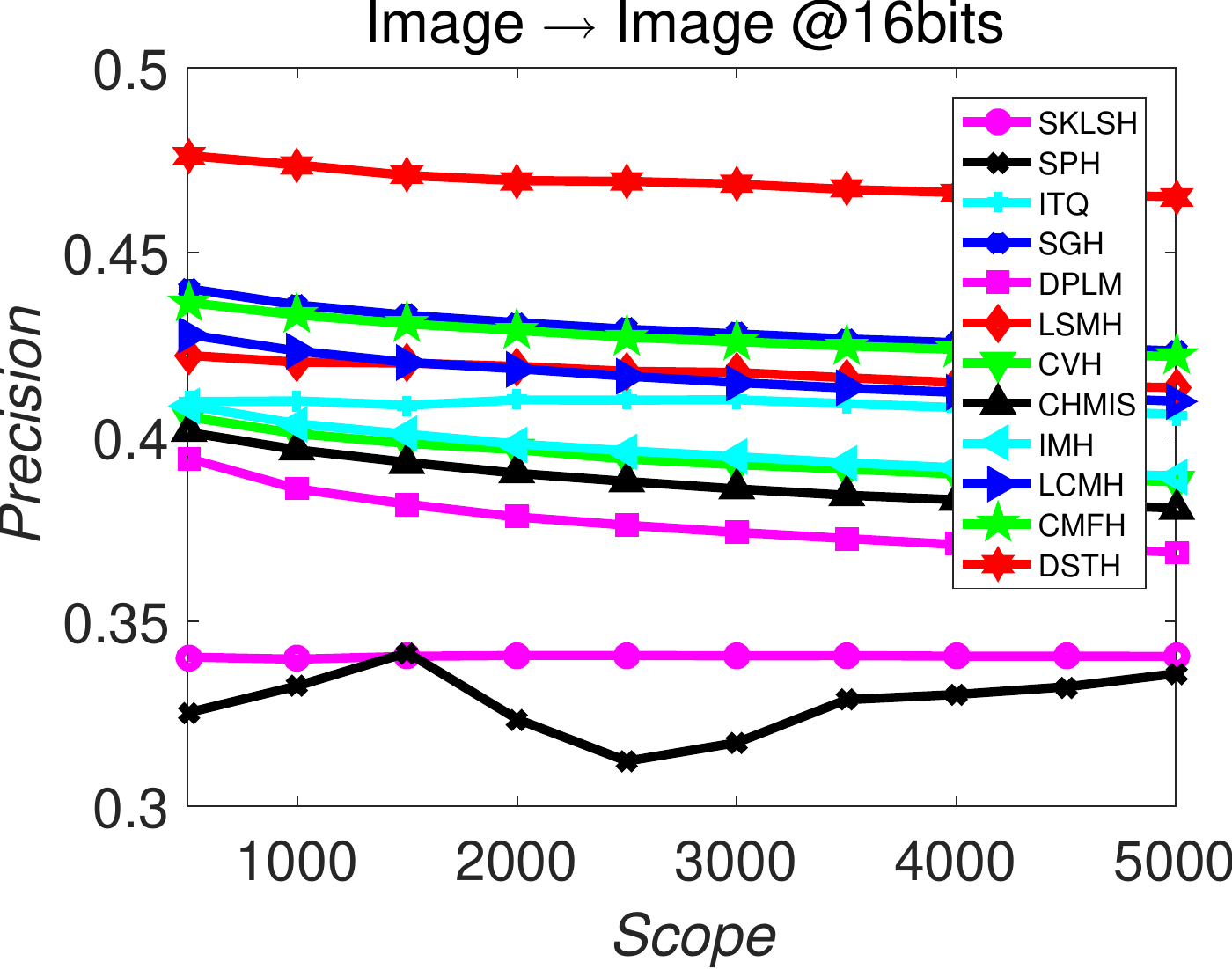}}
}\mbox{
\subfigure{\includegraphics[width=42mm]{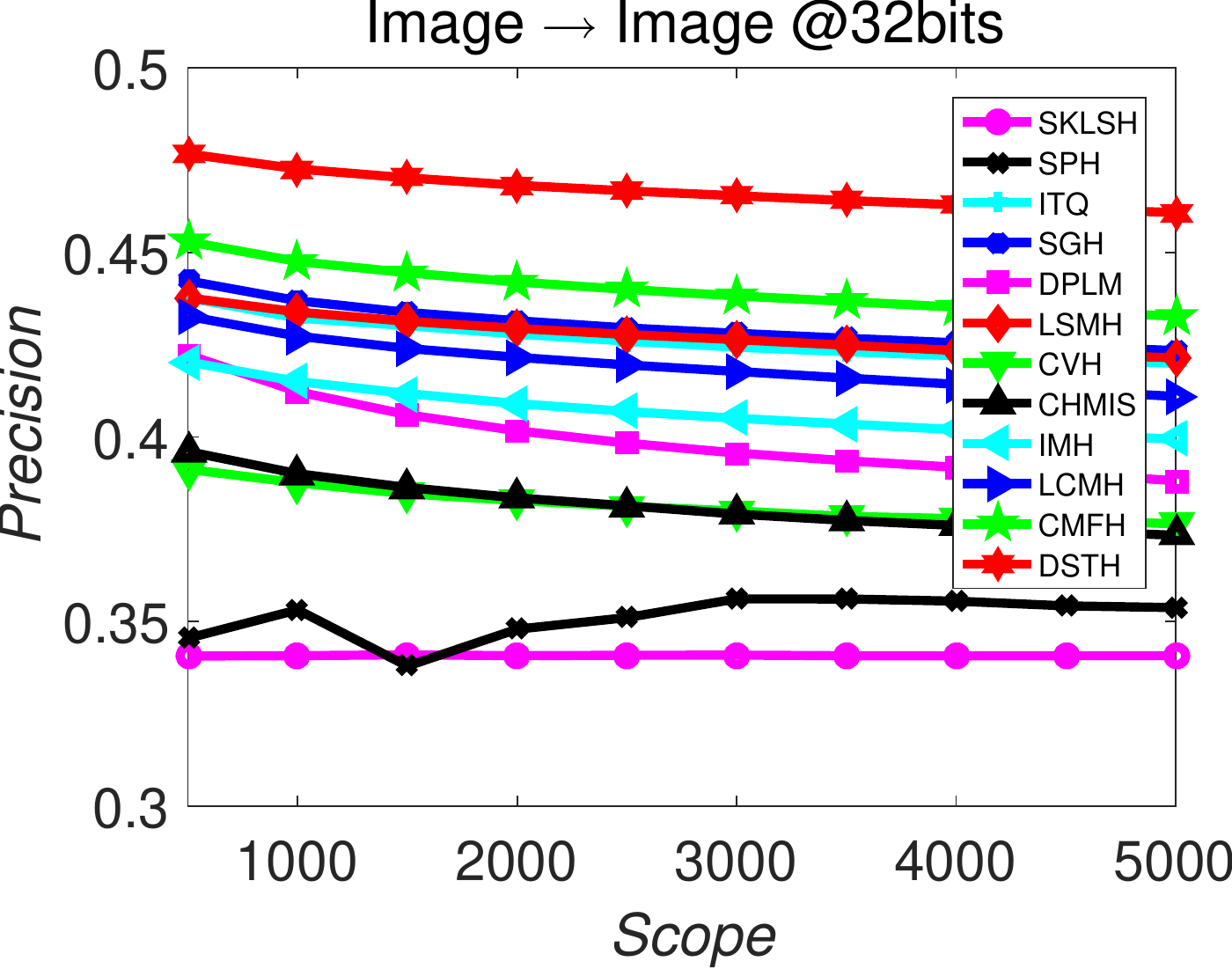}}
}\mbox{
\subfigure{\includegraphics[width=42mm]{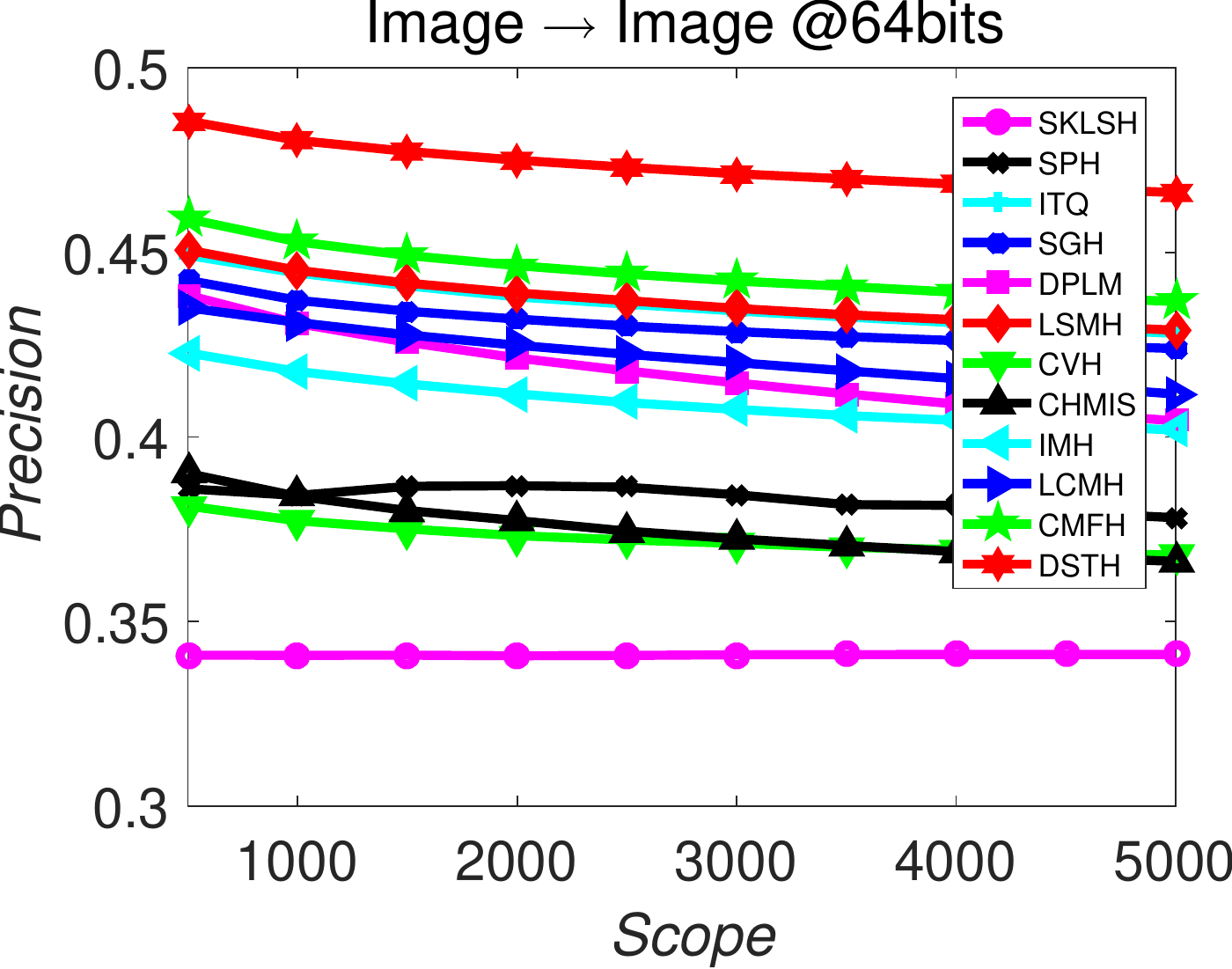}}
}\mbox{
\subfigure{\includegraphics[width=42mm]{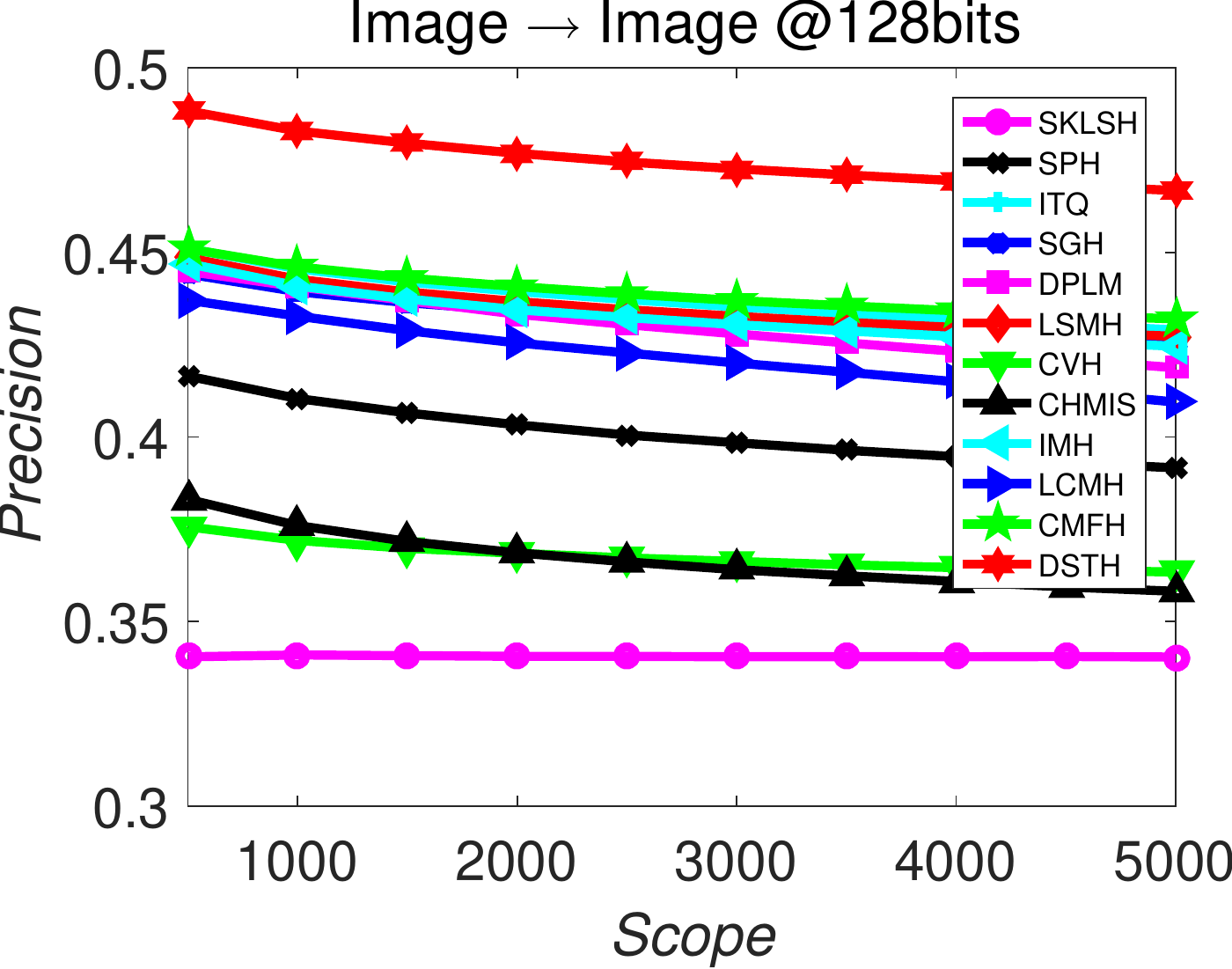}}
}
\caption{\emph{Precision-Scope} curves on \textbf{NUS-WIDE}.}
\label{fig_nuswide}
\end{figure*}

Since cross-modal hashing can also be used for SCBIR, we also incorporate several state-of-the-art cross-modal hashing methods for comparison. They include\footnote{For cross-modal hashing approaches, as we evaluate the performance of SCBIR, we only use hash codes of images.}.
:
\begin{enumerate}[1.]
  \item \textbf{Cross-view hashing} (\textbf{CVH}) \cite{CMFIJCAI}. CVH extends spectral hashing to learn hash functions by jointly
	minimizing Hamming distances of similar samples and maximizing that of dissimilar samples.
  \item \textbf{Composite hashing with multiple information sources} (\textbf{CHMIS}) \cite{CHMIS}. It integrates
	discriminative information from several heterogeneous modalities into the hash codes with
	proper weights. For comparison fairness, text input is removed and only visual input is preserved
	in CHMIS.
  \item \textbf{Inter-media hashing} (\textbf{IMH}) \cite{CMFSIGMOD}. IMH formulates hash function learning in a
	framework where intra-similarity of each individual modal and inter-correlations between different
	modalities are both preserved in hash codes.
  \item \textbf{Linear cross-modal hashing} (\textbf{LCMH}) \cite{LCMH}. In this method, intra-modality similarity is approximately preserved with the new representations
    of samples which are calculated as the distances to several centroids of the clusters. The inter-modality correlation is preserved via
    the shared binary subspace learning.
  \item \textbf{Collective matrix factorization hashing} (\textbf{CMFH}) \cite{TIPCMFH}. CMFH performs cross-modal similarity
	search in a latent shared semantic space by collective matrix factorization.
\end{enumerate}

All parameters in the compared approaches are adjusted according to the relevant literatures and report the best performance.  For implementation of CVH, we kindly use the source code provided by \cite{MLBE}. For LCMH, we carefully implement the code according to relevant paper. For SPH, SKLSH, ITQ, SGH, DPLM, CHMIS, IMH, and CMFH, we directly download the implementation codes from authors' websites.
\subsection{Implementation Details}
5-fold cross validation is adopted to choose parameters. The best performance of DSTH is achieved when $k$ is set to 5, 5, 8 on three datasets respectively (Three datasets denote \textbf{Wiki}, \textbf{MIR Flickr}, and \textbf{NUS-WIDE} successively. Please find the same below.). Furthermore, DSTH has parameters: $\alpha$ and $\beta$. They control the processes of semantic discovery and transfer. The best performance is achieved when $\{\alpha=0.0001, \beta=10000\}$, $\{\alpha=100, \beta=10000\}$, and $\{\alpha=0.0001, \beta=100\}$ on three datasets respectively. The parameters $\mu$ and $\rho$ are used for ALM optimization. The optimal performance is obtained when $\{\mu=1, \rho=2\}$, $\{\mu=0.01, \rho=2\}$, and $\{\mu=0.0001, \rho=2\}$ on three datasets respectively. $\eta$ is used to learn hash functions. The best $\eta$ is set to 0.1, 1000, and 100 on three datasets, respectively.

\begin{table}
\caption{Training time (seconds) of all evaluated approaches on \textbf{NUS-WIDE} when hash code length is 128.}
\label{runtimetable}
\centering
\begin{tabular}{|p{16mm}<{\centering}|p{8mm}<{\centering}|p{8mm}<{\centering}|p{8mm}<{\centering}|p{11mm}<{\centering}|}
\hline
\multirow{2}{*}{Methods} & \multicolumn{4}{c|}{\emph{\textbf{NUS-WIDE}}}\\
\cline{2-5}
& 16 & 32 & 64  & 128  \\
\hline
SKLSH & 0.01 & 0.01 & 0.01 & 0.01 \\
\hline
SPH & 0.11 & 0.13 & 0.21 & 0.54 \\
\hline
ITQ & 0.22 & 0.27 & 0.38 & 0.82 \\
\hline
SGH & 2.84  & 2.85 & 2.85 & 2.94 \\
\hline
DPLM & 1.43 & 1.31 & 1.42 & 1.40 \\
\hline
LSMH & 0.28 & 0.37 & 0.67 & 1.27 \\
\hline
\hline
CVH & 0.37 & 0.38 & 0.42 & 0.72 \\
\hline
CHMIS & 119.77 & 157.74 & 219.48 & 302.90 \\
\hline
IMH & 43.95 & 43.85 & 43.97 & 44.23 \\
\hline
LCMH & 5.23 & 5.14 & 5.00  & 5.20 \\
\hline
CMFH & 12.96 & 14.05 & 16.01 & 20.63 \\
\hline
DSTH & 12.46 & 16.55 & 19.17 & 20.58 \\
\hline
\end{tabular}
\vspace{-2mm}
\end{table}

\subsection{Comparison Results}
Table \ref{resulttable} presents the main mAP results. Code length on all datasets is varied in the range of $\{16, 32, 64, 128\}$. Figure \ref{fig_nuswide} reports \emph{Precision-Scope} curve on \textbf{NUS-WIDE}. The search scope is ranged from 500 to 5000 with stepsize 500. The presented results clearly demonstrate that DSTH consistently outperforms the compared approaches on all datasets and hashing bits. On \textbf{Wiki} and \textbf{NUS-WIDE}, DSTH outperforms the second best performance by more than 2\%. Among the competitors, SKLSH achieves the worst performance in most cases. This is because SKLSH is data-independent hashing which generates hash codes without integrating any semantics from retrieved images. In addition, it is interesting to find that cross-modal hashing may not obtain better performance than the uni-modal hashing approaches in many cases. This experimental phenomenon can be explained as follows: cross-modal hashing is specially developed for cross-modal retrieval task. Hence, discovering the shared semantics is the main design target. This design may be propitious to the target of cross-modal retrieval. However, the shared semantic space may loss valuable visual discriminative information, which is essential for the task of SCBIR. Therefore, the retrieval performance of cross-modal hashing methods on SCBIR may be impaired.

We also investigate the training time of all evaluated approaches. This experiment is conducted on \textbf{NUS-WIDE} when hash code length is fixed to 128. The running time is recoded on a PC with Intel(R) Xeon(R) CPU E5-1650@3.60GHz and 64GB RAM. Table \ref{runtimetable} presents the main results. We can find that the time consumption of DSTH is on the same
order of magnitude as that of CMFH. The time cost of DSTH is acceptable. It is much faster than IMH and CHMIS.
\begin{figure*}
\centering
\mbox{
\subfigure[$\alpha$ is fixed to 0.0001]{\includegraphics[width=38mm]{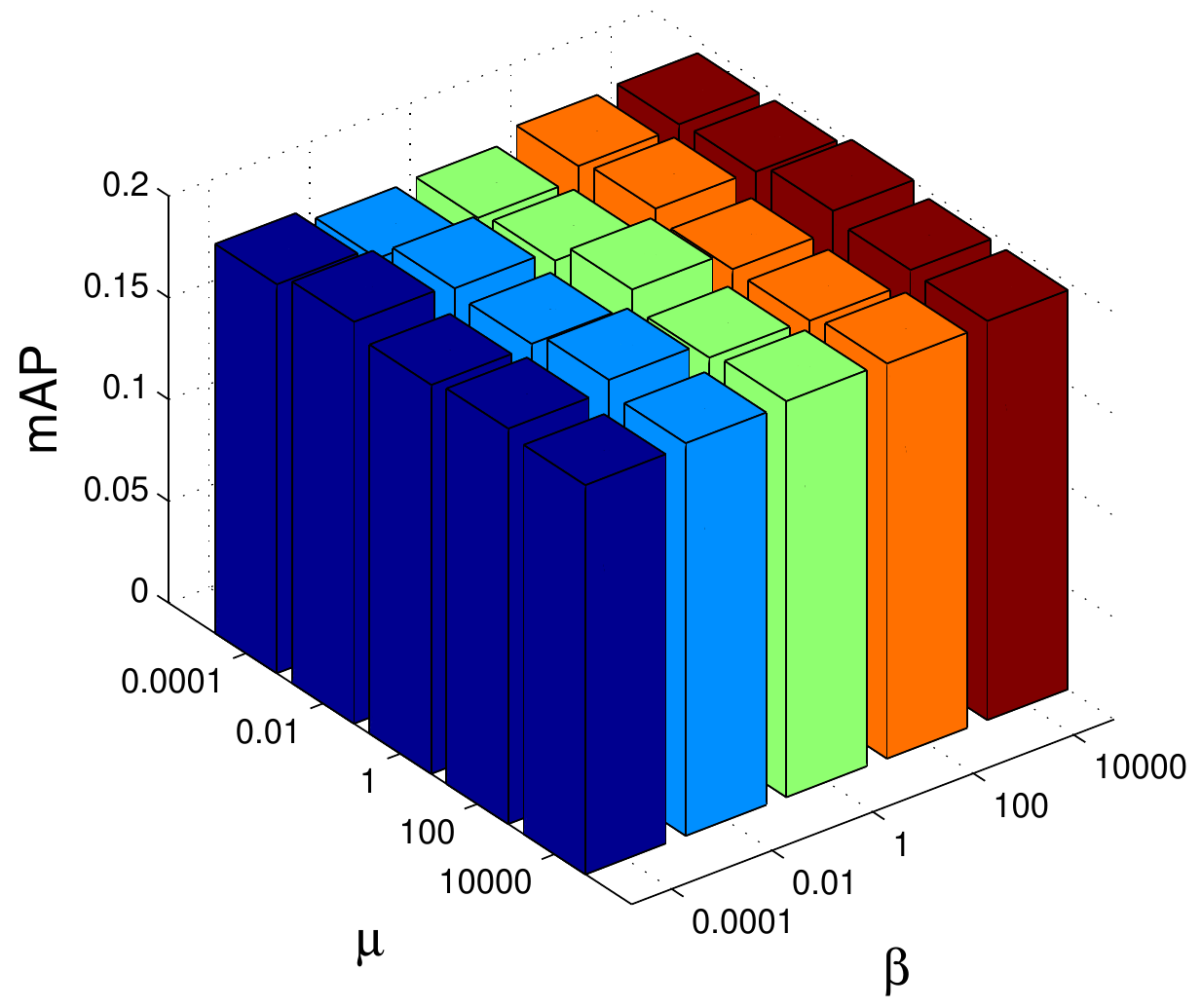}}
}\mbox{
\subfigure[$\beta$ is fixed to 10000]{\includegraphics[width=38mm]{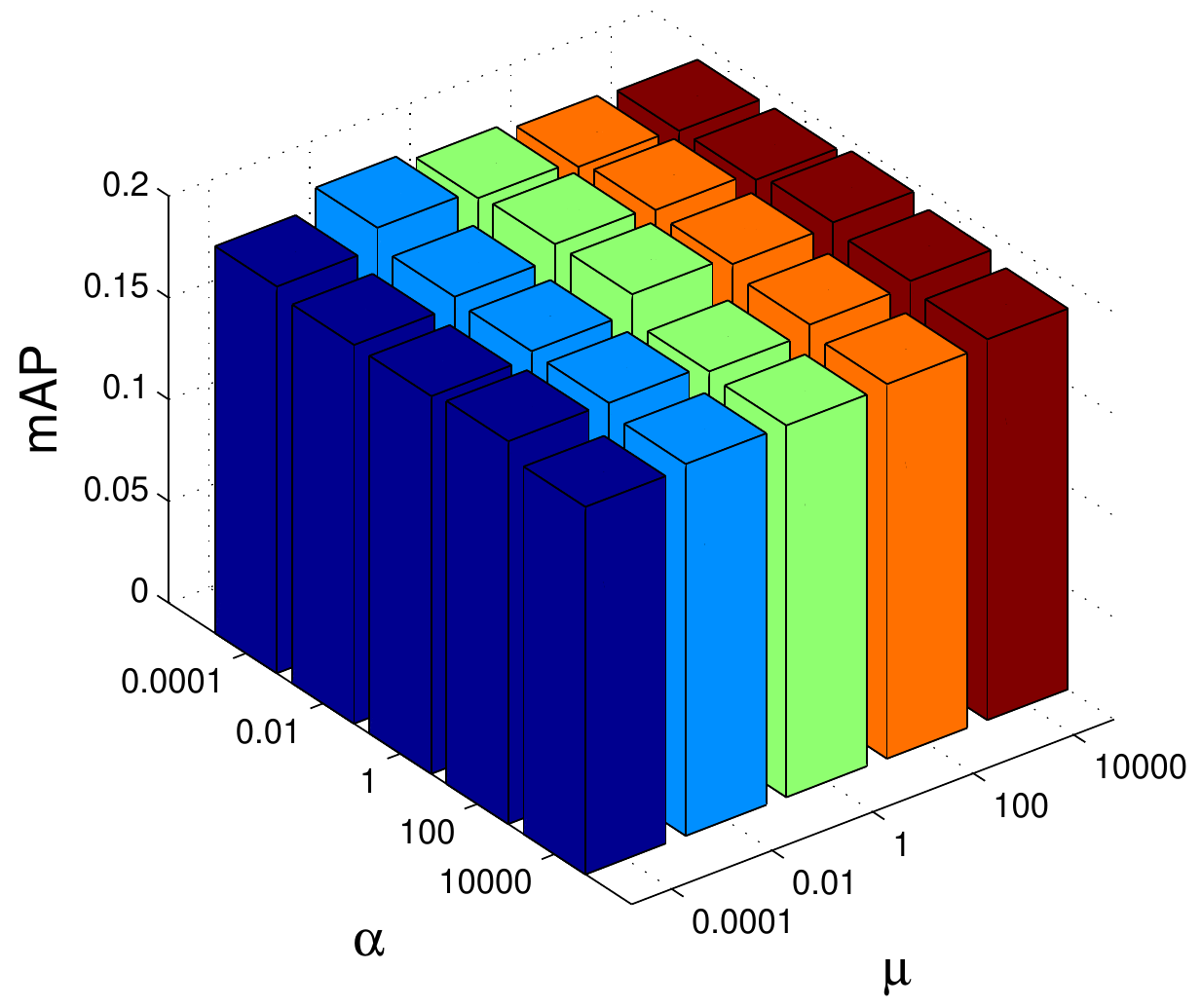}}
}\mbox{
\subfigure[$\mu$ is fixed to 1]{\includegraphics[width=38mm]{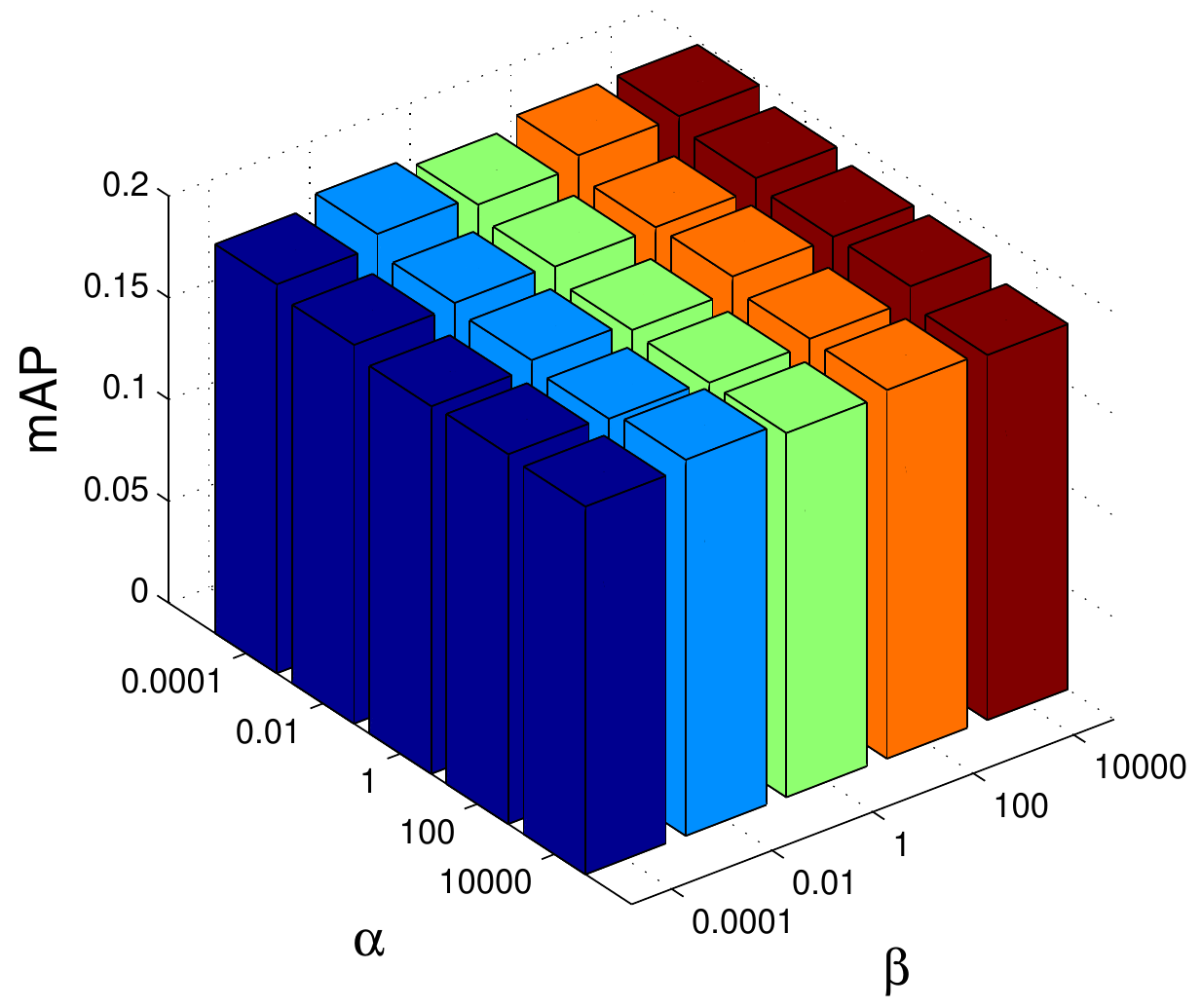}}
}\mbox{
\subfigure[]{\includegraphics[width=38mm]{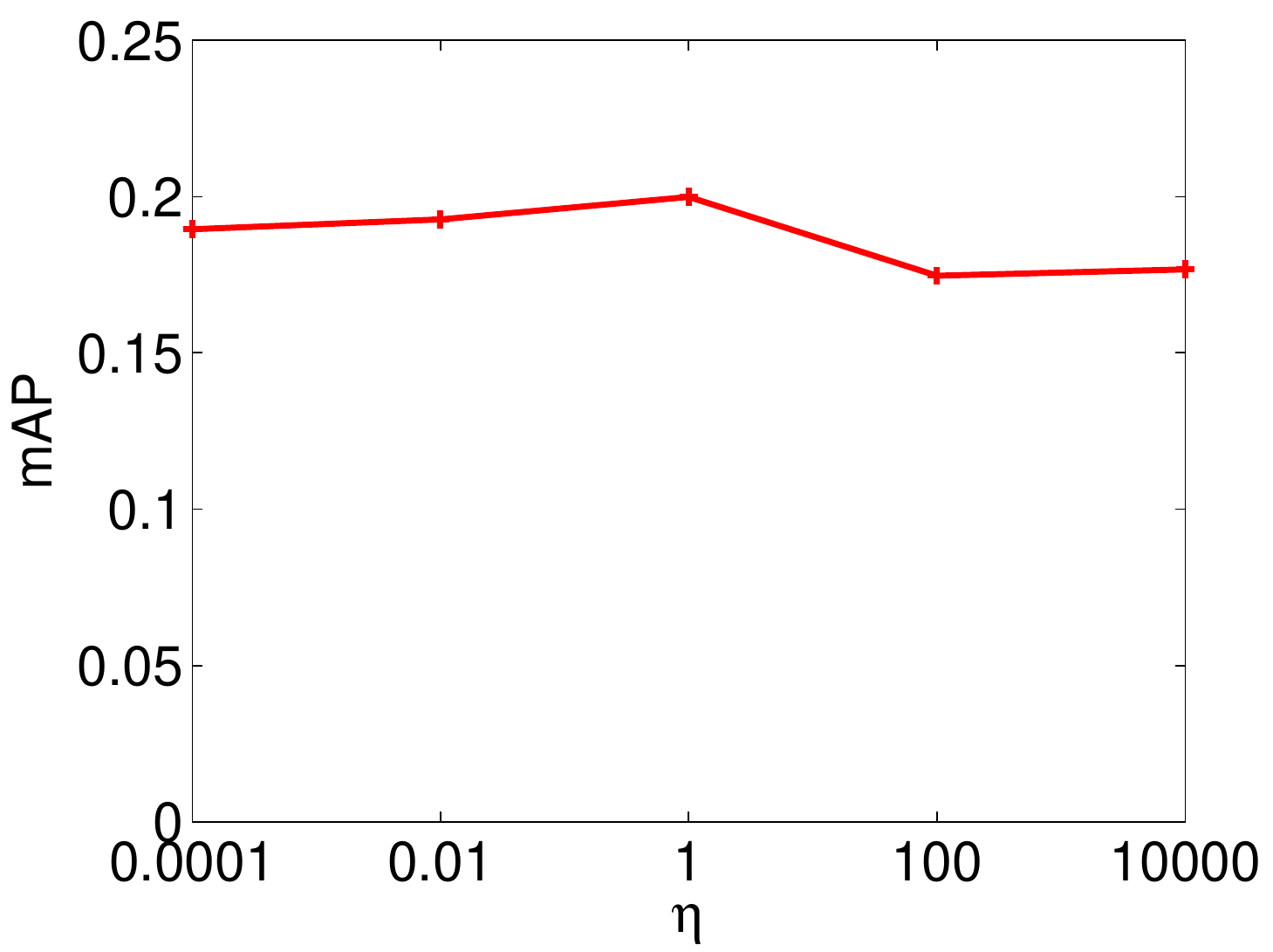}}
}
\caption{Performance variations with parameters on \textbf{Wiki}.}
\label{wiki_para}
\end{figure*}
\begin{figure*}
\centering
\mbox{
\subfigure[$\alpha$ is fixed to 100]{\includegraphics[width=38mm]{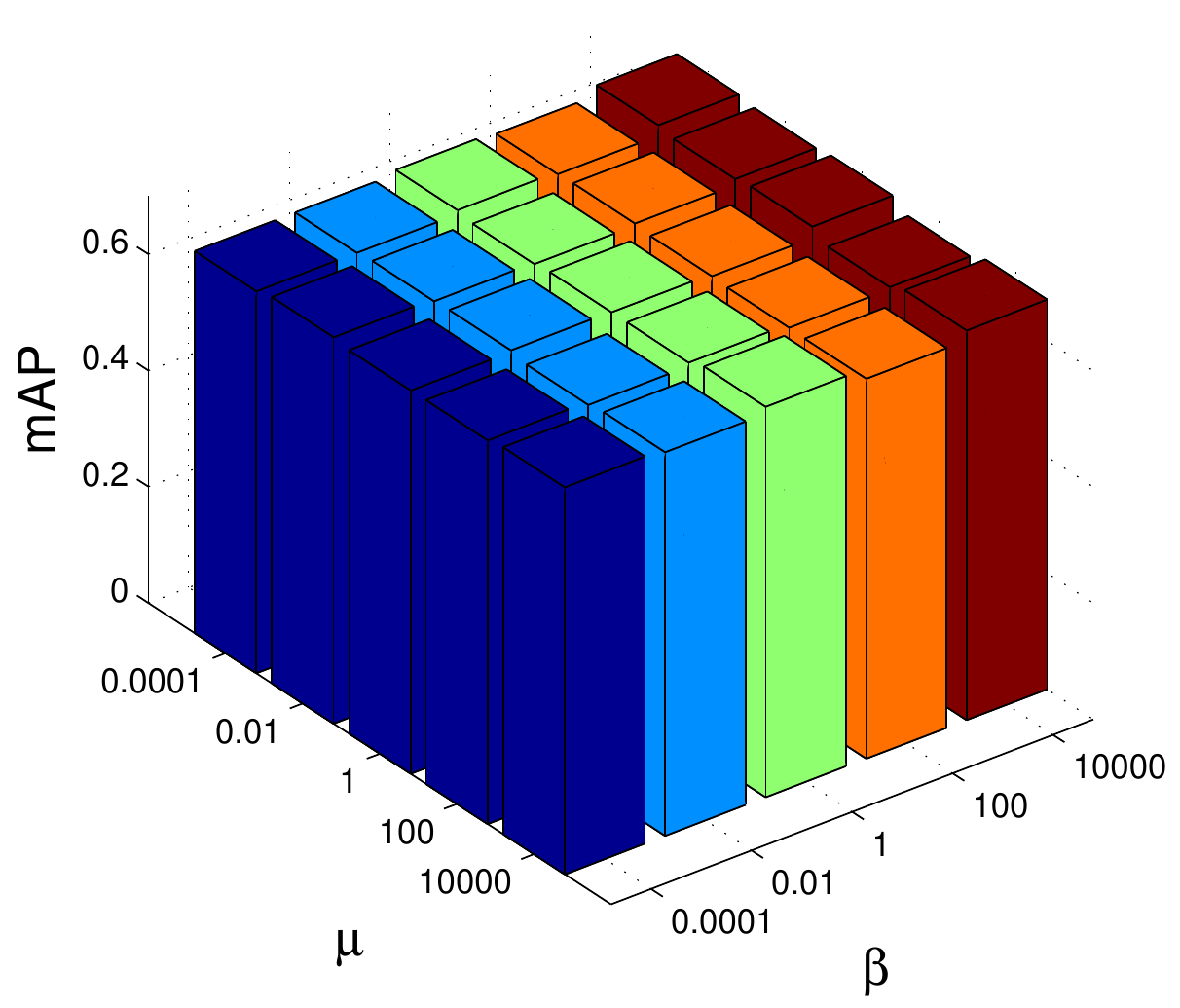}}
}\mbox{
\subfigure[$\beta$ is fixed to 10000]{\includegraphics[width=38mm]{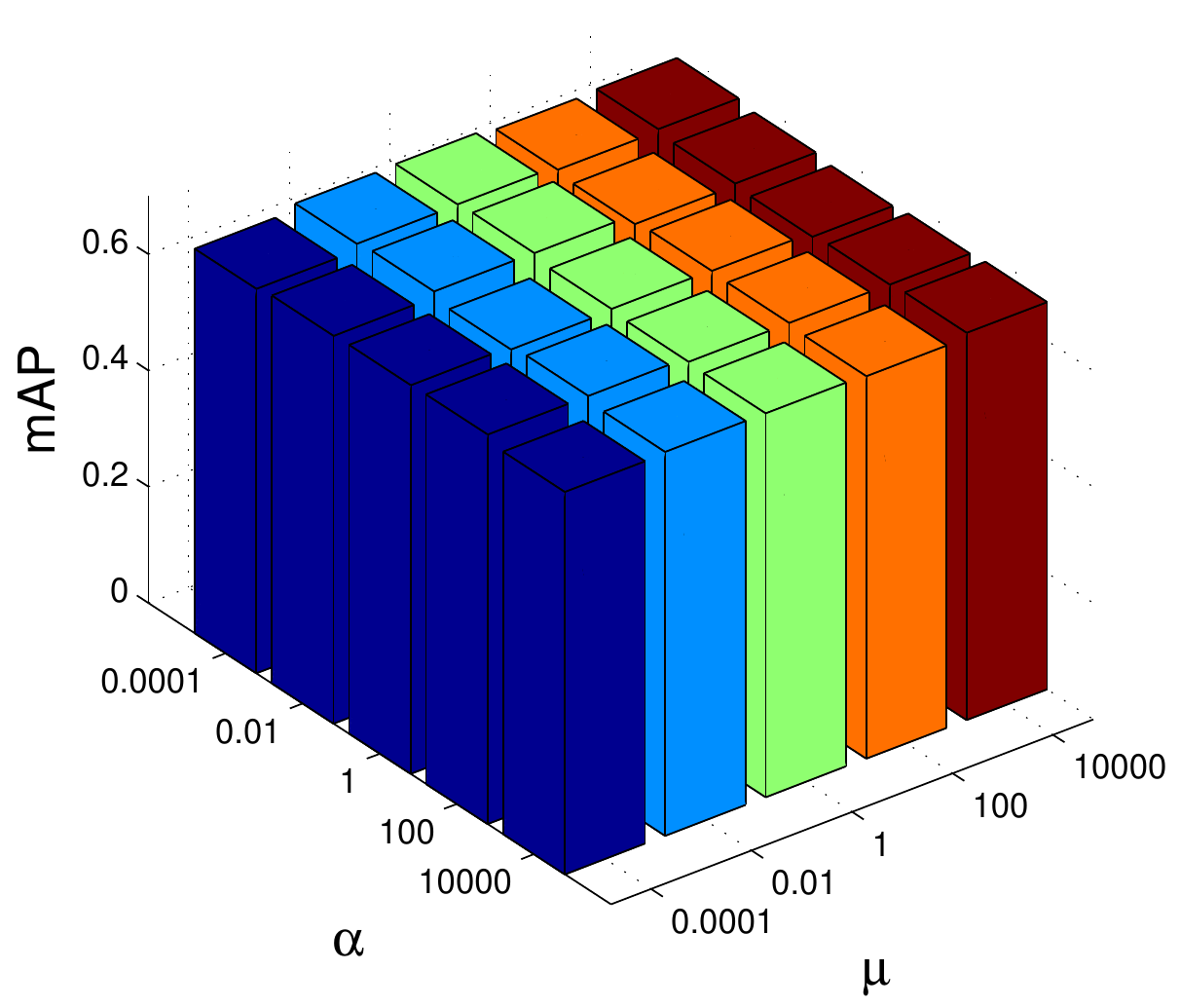}}
}\mbox{
\subfigure[$\mu$ is fixed to 0.01]{\includegraphics[width=38mm]{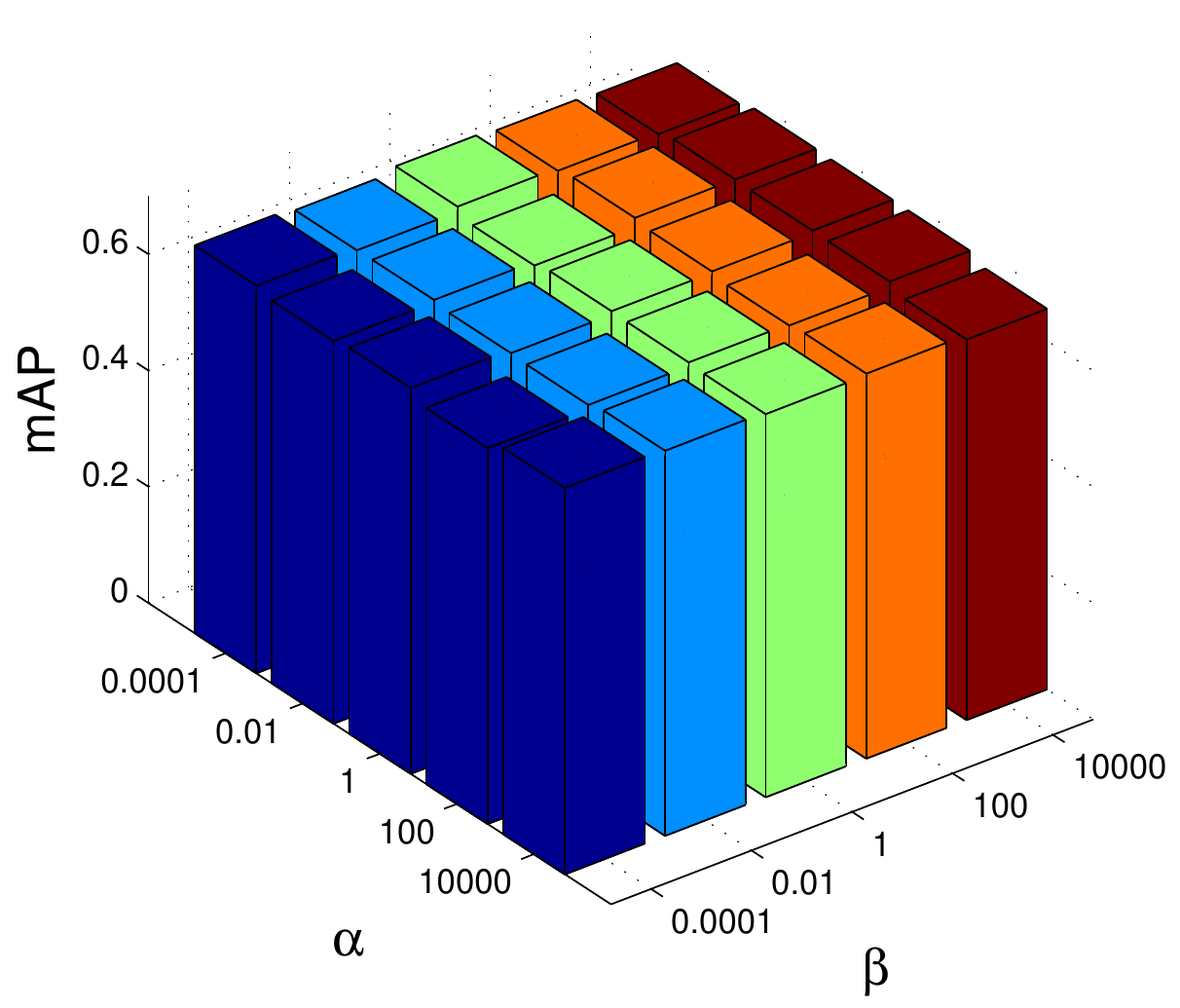}}
}\mbox{
\subfigure[]{\includegraphics[width=38mm]{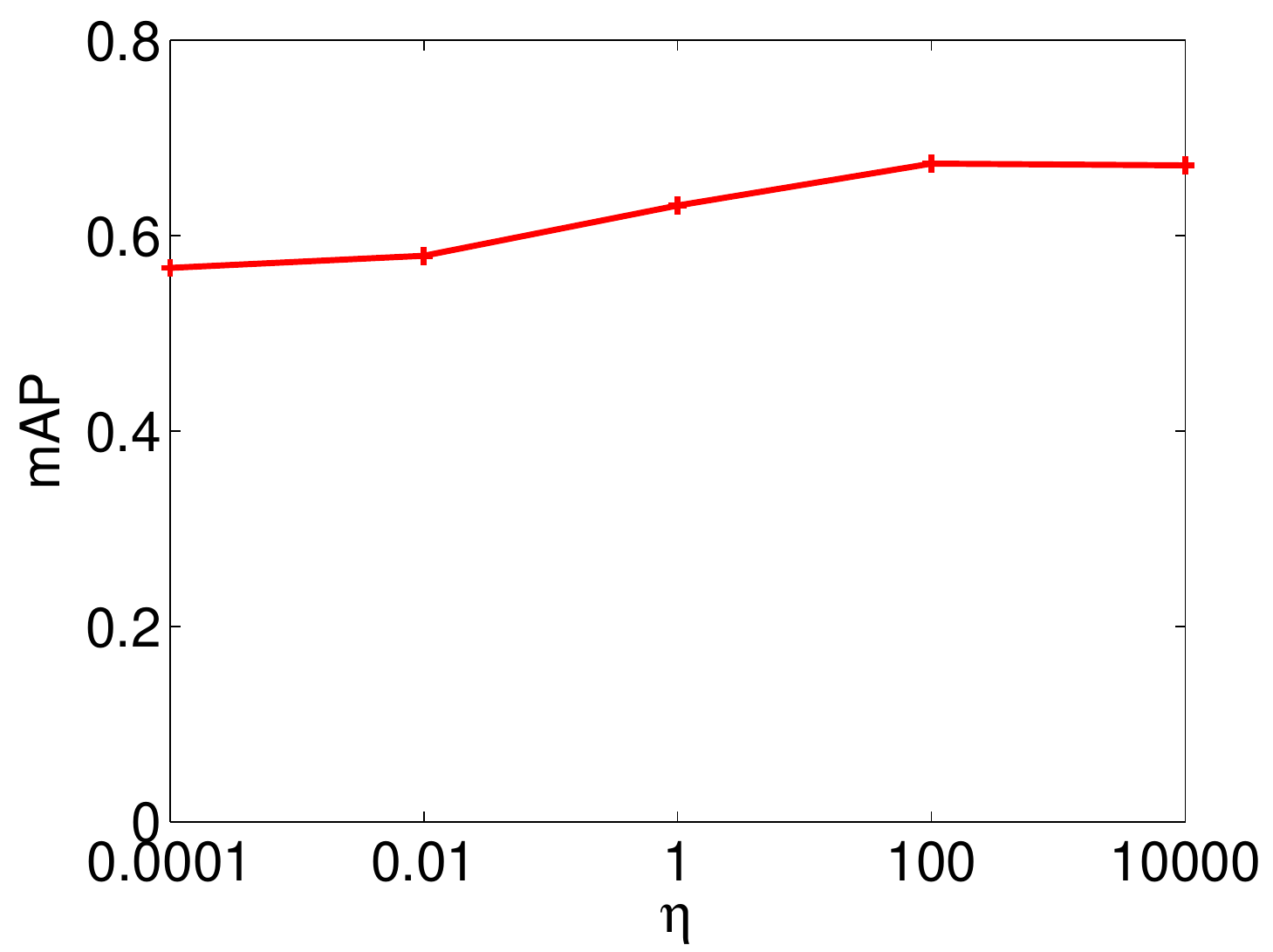}}
}
\caption{Performance variations with parameters on \textbf{MIR Flickr}.}
\label{mirflickr_para}
\end{figure*}
\begin{figure*}
\centering
\mbox{
\subfigure[$\alpha$ is fixed to 0.0001]{\includegraphics[width=38mm]{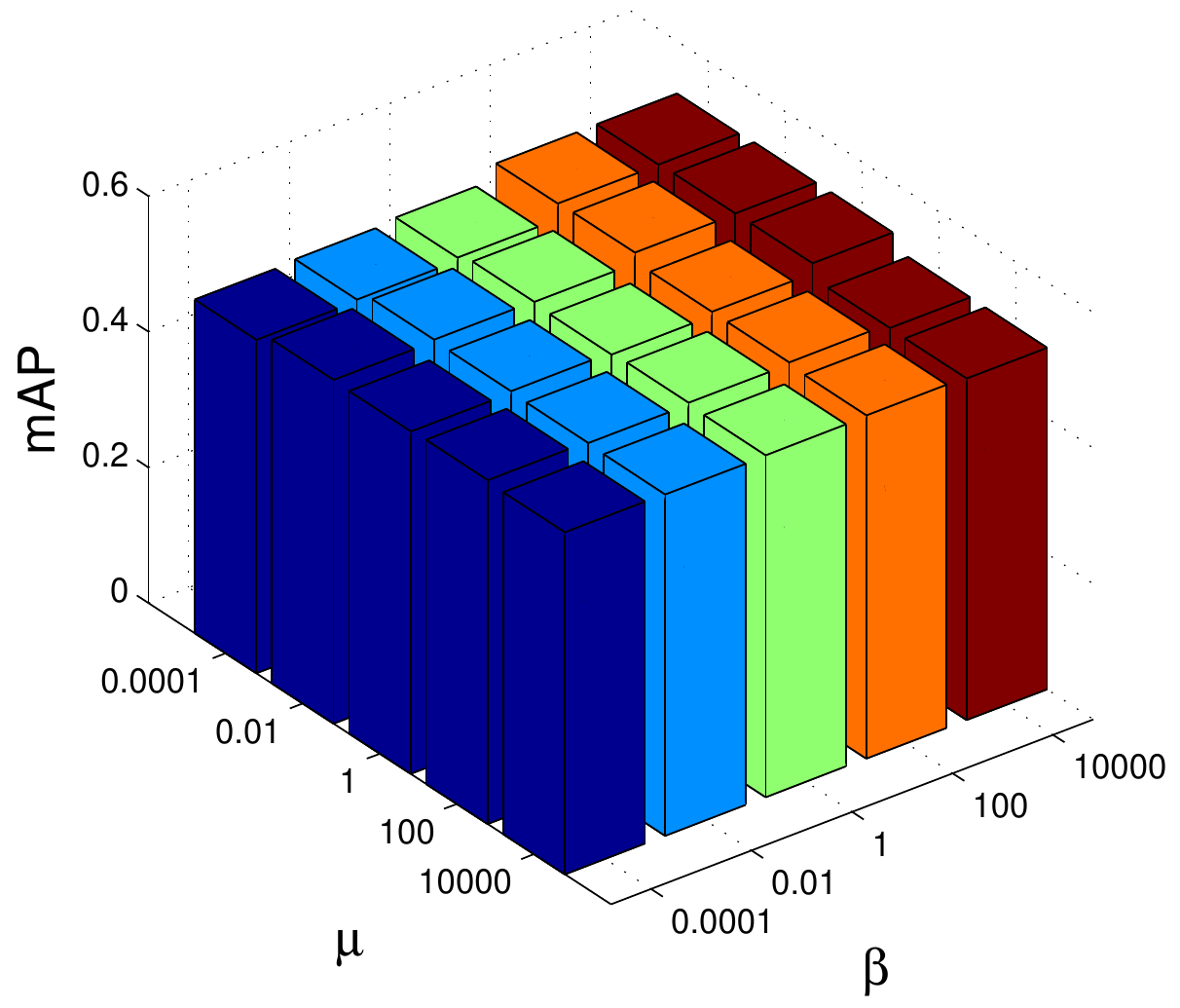}}
}\mbox{
\subfigure[$\beta$ is fixed to 100]{\includegraphics[width=38mm]{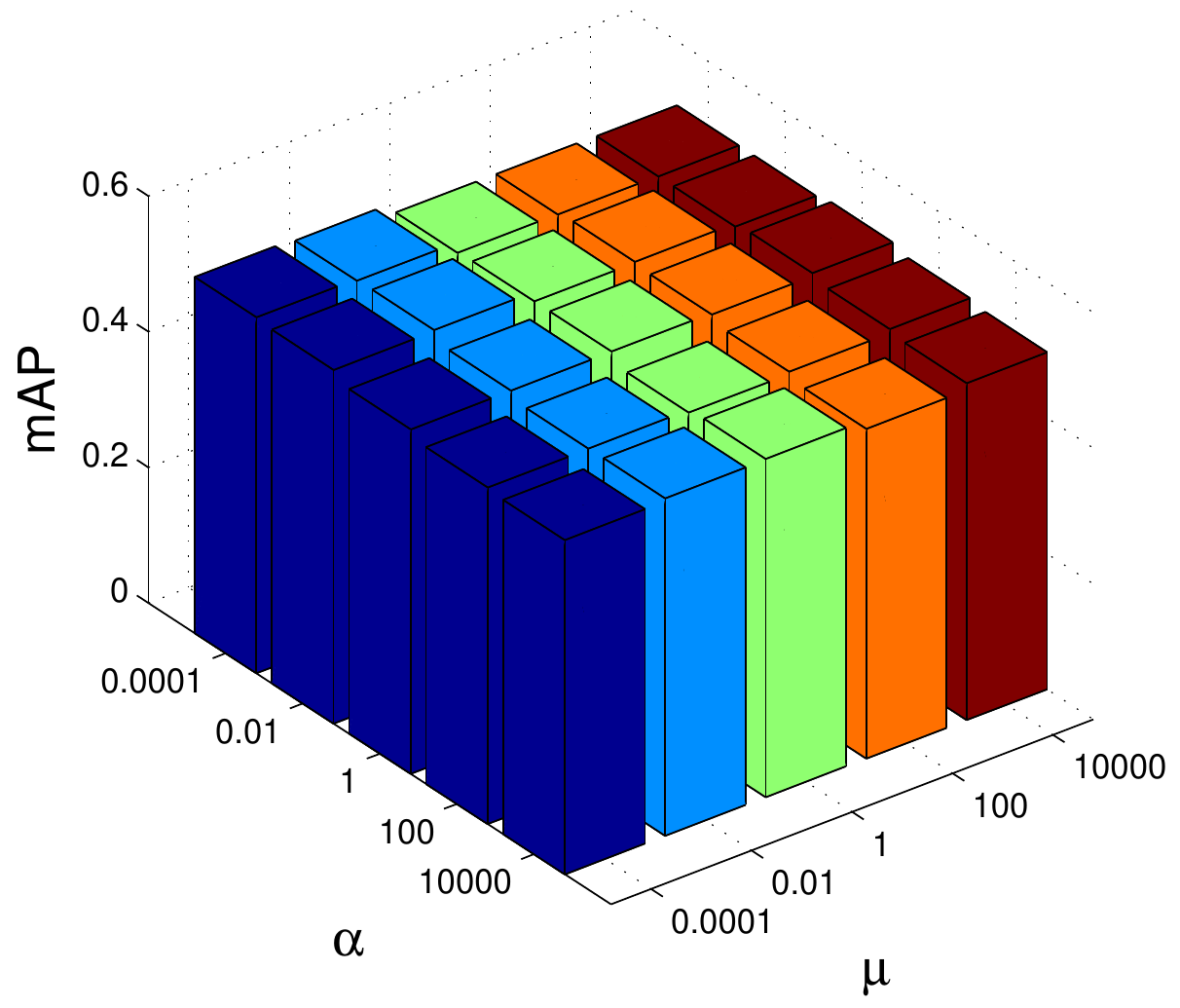}}
}\mbox{
\subfigure[$\mu$ is fixed to 0.0001]{\includegraphics[width=38mm]{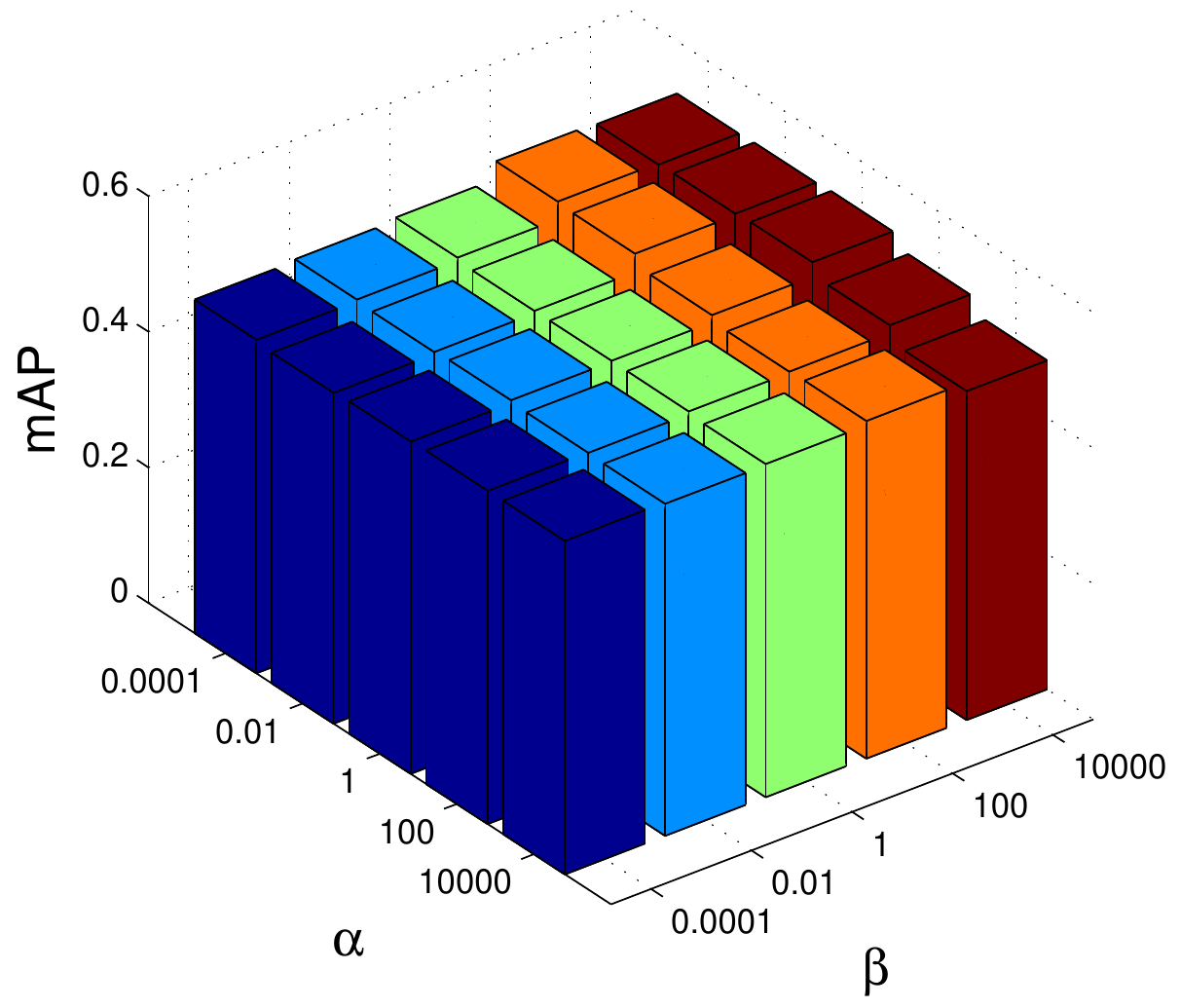}}
}\mbox{
\subfigure[]{\includegraphics[width=38mm]{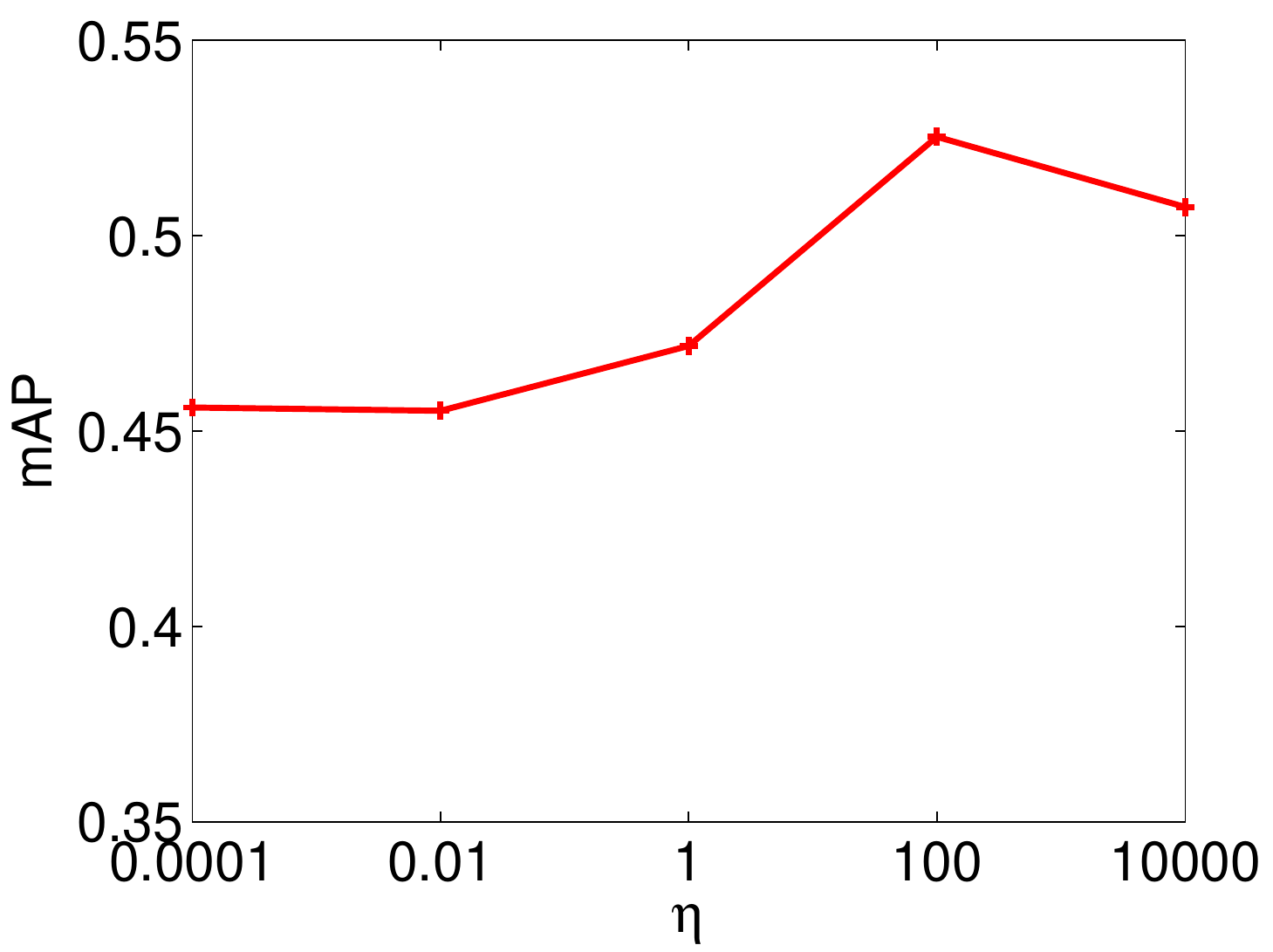}}
}
\caption{Performance variations with parameters on \textbf{NUS-WIDE}.}
\label{nuswide_para}
\end{figure*}

\subsection{Effects of Discrete Optimization}
Our approach can directly deal with the discrete constraint, bit-uncorrelation constraint and bit-balance constraint imposed on hash codes. To evaluate the effects of three constraints, we report the performance of DSTH respectively by relaxing the discrete constraint, removing bit-balance constraint and bit-uncorrelation constraint in the Eq.(\ref{eq:objective}). We denote DSTH-I as the approach that relaxes discrete constraint. In this experiment, we adopt conventional relaxing+rounding optimization in many existing hashing approaches to solve the hash codes. Specifically, the relaxed hashing values are first solved with ALM, but the final binary hash codes are generated by mean thresholding. We also compare the performance of DSTH with the variant approach DSTH-II that removes bit-balance constraint, and the variant approach DSTH-III that removes bit-−uncorrelation constraint. Table \ref{discretop} summarizes the comparison results. We can clearly observe that DSTH achieves superior performance in almost all cases. These results validate the effects of discrete optimization on direct semantic transfer and alleviating information loss. All three constraints contribute positively to the retrieval performance.
\begin{figure*}
\centering
\mbox{
\subfigure[\textbf{Wiki}]{\includegraphics[width=51.4mm]{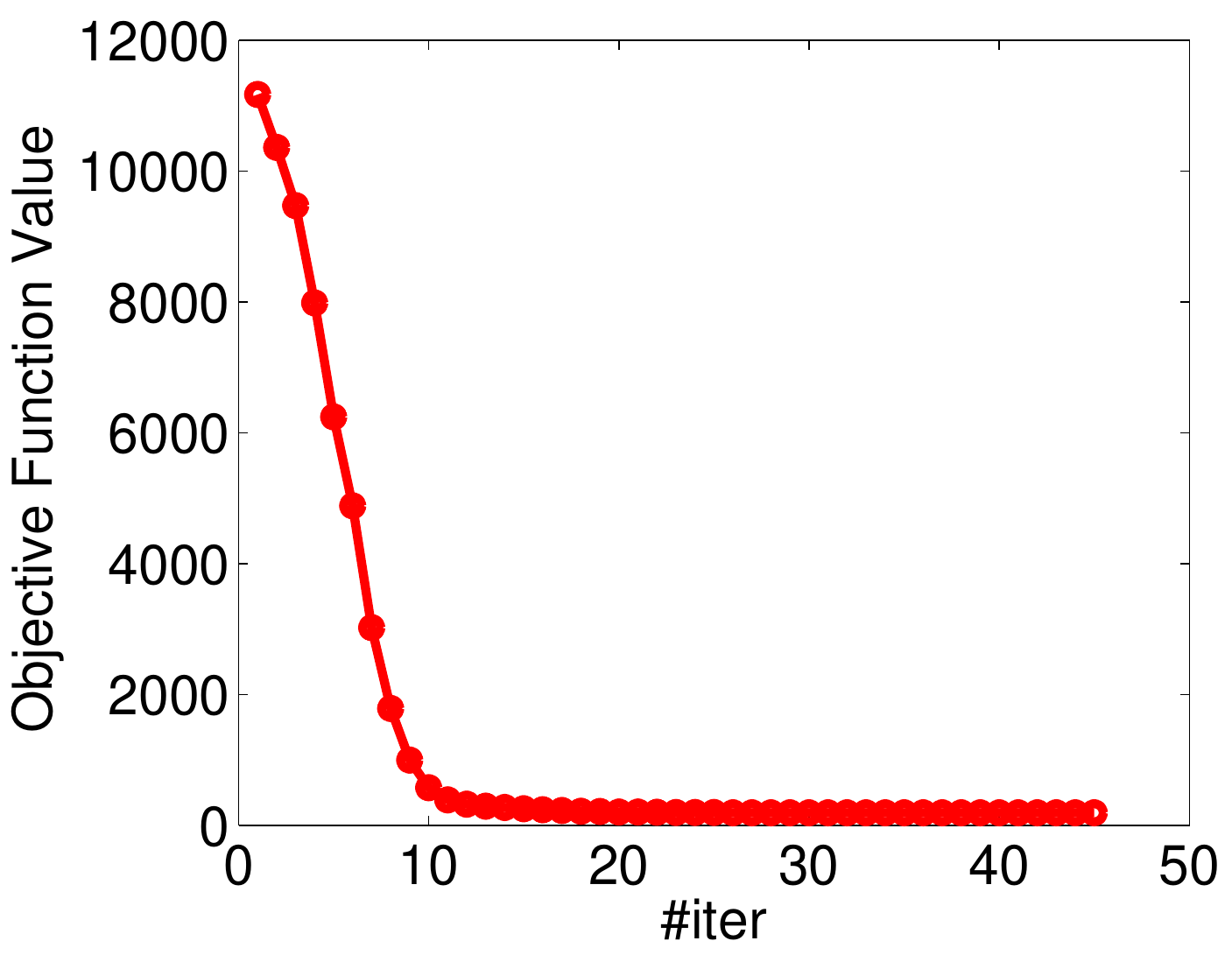}}
}\mbox{
\subfigure[\textbf{MIR Flickr}]{\includegraphics[width=51.4mm]{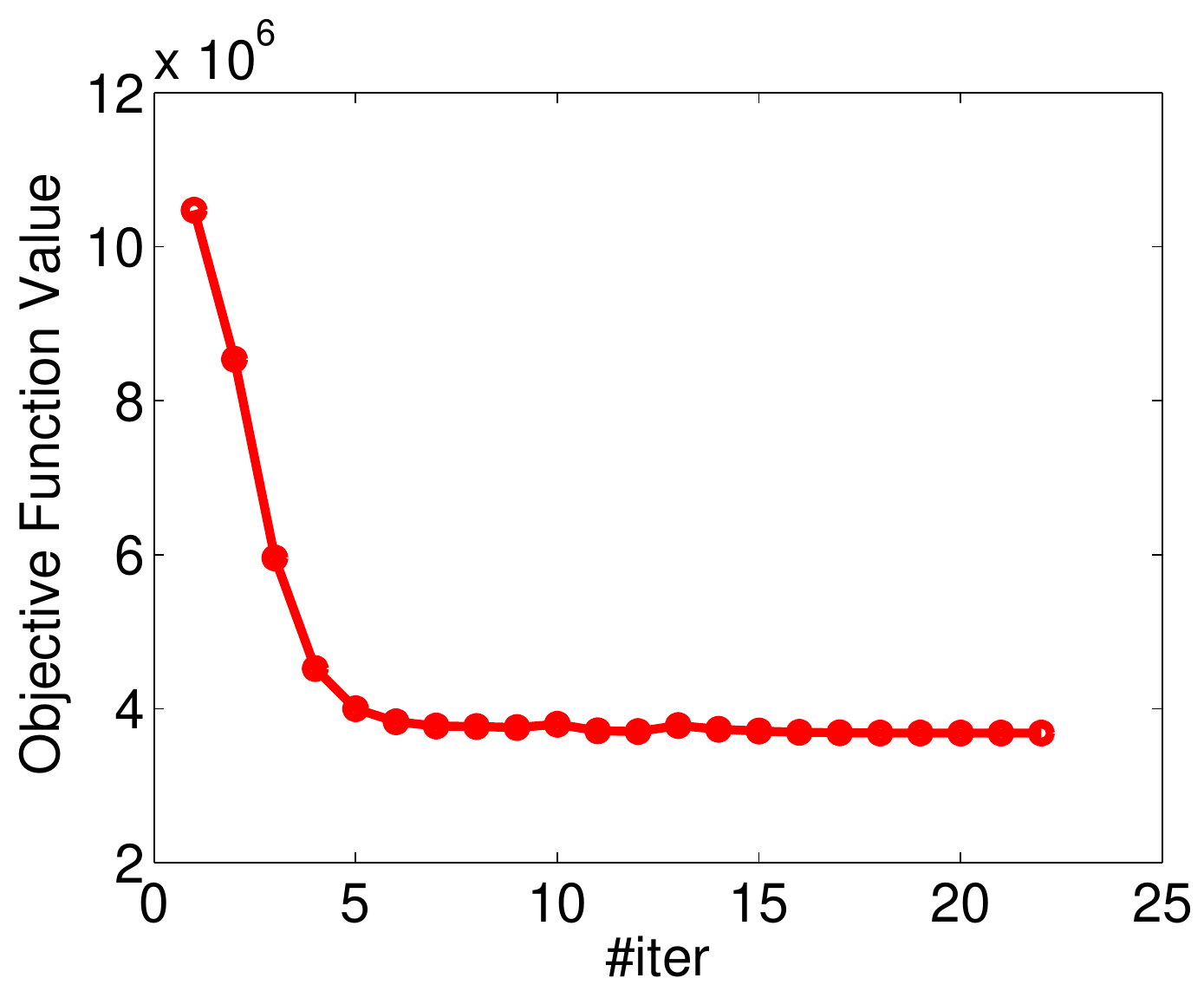}}
}\mbox{
\subfigure[\textbf{NUS-WIDE}]{\includegraphics[width=50mm]{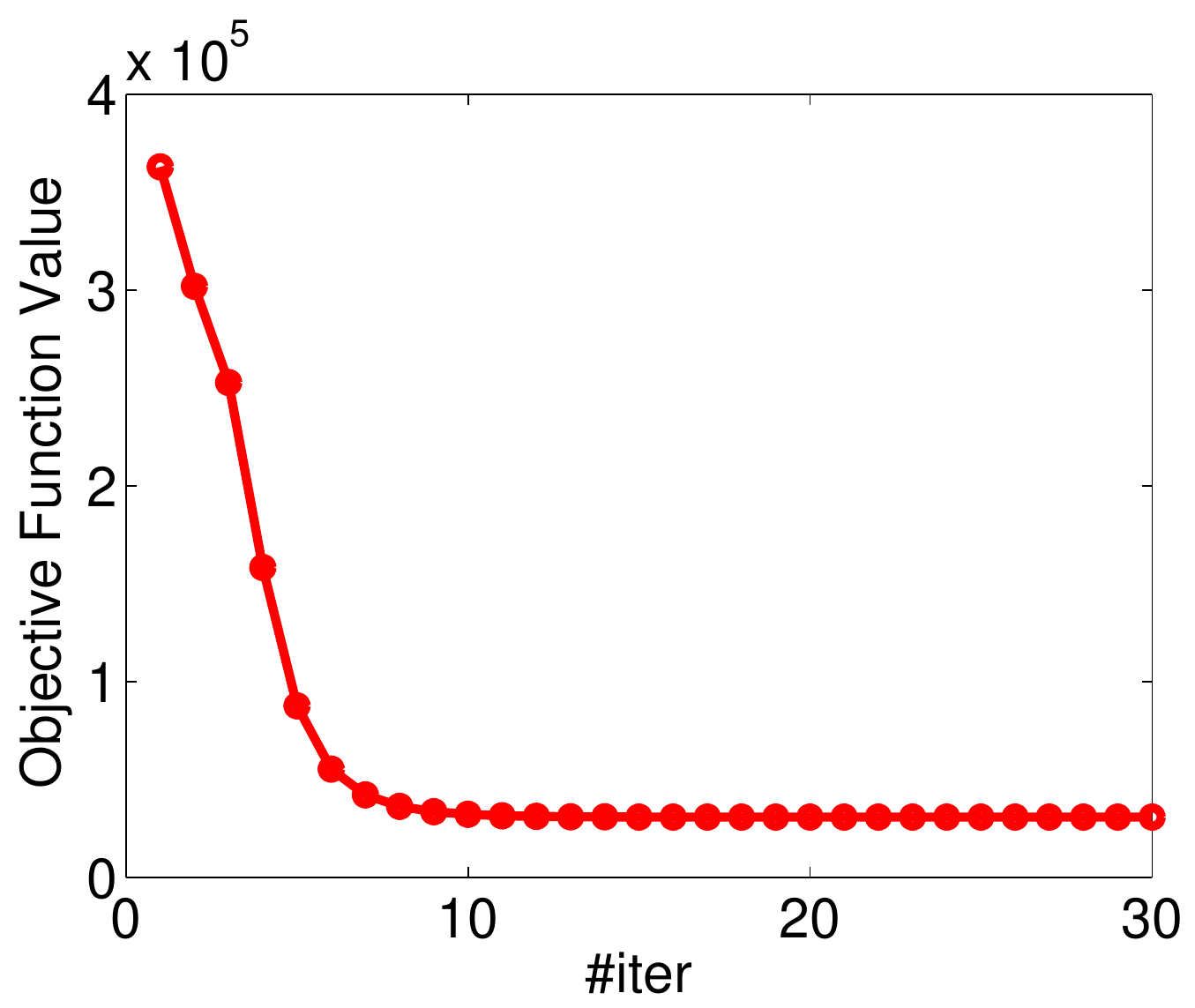}}
}
\caption{Objective function value variations with the number of iterations on three datasets.}
\label{nuswide_conv}
\end{figure*}
\subsection{Effects of Semantic Transfer}
Our approach explicitly exploits the latent semantics involved in contextual modalities to enhance the discriminative capability of discrete hash codes. In this subsection, we conduct experiment to investigate the effects of semantic transfer on the overall performance of DSTH. To this end, we compare the performance of DSTH with that of the approach variant which only considers visual similarity preservation (the first part of DSTH). We denote this variant as DSTH-IV. Table \ref{discretop} presents the comparison results. It shows that DSTH consistently outperforms the competitor on all code lengths and datasets. On \textbf{MIR Flickr} and \textbf{NUS-WIDE}, the largest performance increase can reach about 4\% and 7\% respectively. The performance increase is mainly attributed to the effective semantic enrichment of hash codes by semantic transfer. It also validates the fact that semantics in contextual modalities are indeed complementary with visual contents of images. In addition, we find from the table that the performance gap is different on different code lengths and datasets. This is attributed to the different effects of contextual modalities on enriching semantics of hash codes.

\subsection{Performance Variations with Training Size}
In this experiment, we evaluate the impact of training size on DSTH performance. We fix the hash code length to 128 and report the performance on \textbf{NUS-WIDE}. Table \ref{trainingsize} illustrates the performance variations with training size. We can easily observe that the performance of DSTH first increases with training data size and then becomes stable after certain point (training size 4.5K). Specifically, the gap between the performance obtained on 0.5K and that on 4.5K is 0.0235. DSTH can achieve satisfactory performance with a reasonably small training set. This experimental phenomenon illustrates that, even with small training data, DSTH can already effectively capture the valuable semantics to enhance the discriminative capability of image hash codes. It also validates the well training efficiency of DSTH when obtaining promising retrieval performance. In addition, it is interesting to find that, even with smaller training data, DSTH can achieve better performance than several compared approaches trained with more data. The reason of performance improvement is that the discovered semantics in contextual modalities can effectively mitigate the semantic shortage of shorter hash codes. These results also validate the effects of semantic transfer on enhancing the representation capability of hash codes.

In addition, we report the training time variations with training size. Figure \ref{fig:trainsize} illustrates the main results. We can easily observe that the training time increases linearly with training size. It validates the linear scalability of DSTH and demonstrates that it is suitable for large-scale datasets.

\subsection{Convergence Analysis}
As analysis in Section \ref{sec:3}, at each iteration, the updating of variables will monotonically decreases towards the lower-bounded objective function in Eq.(\ref{eq:objective}). Theoretically, the iterations will make the proposed discrete optimization method converge. In this subsection, we conduct experiment on \textbf{NUS-WIDE} with fixed hash code length 128 to verify this claim. Similar results can be obtained on other datasets and hash code lengths. Figure \ref{nuswide_conv} illustrates the main experimental results. We can clearly find that, on three datasets, the objective function value decreases sharply first and does not change significantly after several iterations (about 10). This result empirically validates that the convergence of DSTH can be achieved with augmented Lagrangian multiplier approach.

\subsection{Parameter Sensitivity Experiment}
In this subsection, we conduct empirical experiments to validate the parameter sensitivity of DSTH. More specifically, we observe the performance variations of DSTH with $\alpha$, $\beta$, $\mu$ and $\eta$. $\alpha$, $\beta$, and $\mu$ are used in discrete optimization (Eq.(\ref{eq:tj})), $\eta$ is used in hash function learning. They are all designed to play the trade-off between regularization terms and empirical loss. In experiment, we report the experimental results when these parameters are varied from $\{10^{-4}, 10^{-2}, 1, 10^2, 10^4\}$.  For $\alpha$, $\beta$, and $\mu$, as they are equipped in the same equation, we observe the performance variations with respect to two parameters while fixing the remaining one parameter. For $\eta$, we observe the performance variations by fixing $\alpha$, $\beta$, and $\mu$. Figure \ref{wiki_para}, \ref{mirflickr_para}, and \ref{nuswide_para} demonstrate the main experimental results. From these figures, we can find that the performance is relatively stable to a wide range of parameter variations ($\alpha$, $\beta$, $\mu$). And it can achieve the best result when $\eta$ is set to a certain value. The best performance is achieved when parameters are set as: \textbf{Wiki}$\{\alpha=0.0001, \beta=10000, \mu=1, \eta=1\}$, \textbf{MIR Flickr}$\{\alpha=100, \beta=10000, \mu=0.01, \eta=100\}$, \textbf{NUS-WIDE}$\{\alpha=0.0001, \beta=100, \mu=0.0001, \eta=100\}$.

\section{Conclusions and Future Work}
\label{sec:5}
Because of the intrinsic limitation of image representation on characterizing high-level semantics, existing hashing methods for scalable content-based image retrieval inevitably suffer from semantic shortage. In this paper, we propose the \emph{Discrete Semantic Transfer Hashing} (DSTH) to tackle the problem. It \emph{directly} exploits abundant auxiliary contextual modalities to augment the semantics of discrete image hash codes.  We formulate a unified hashing framework to simultaneously preserve visual similarities and perform semantic transfer. Moreover, to guarantee direct semantic transfer and avoid information loss, we explicitly impose the discrete constraint, bit-uncorrelation constraint and bit-balance constraint on hash codes. A novel and effective discrete optimization method with favorable convergence is developed to iteratively solve the optimization problem. The discrete hashing optimization has linear computation complexity and desirable scalability. Experiments on three benchmarks demonstrate the superior performance of DSTH compared with several state-of-the-art hashing methods.

In the future, inspired by the recent success of unsupervised deep hashing \cite{deepbit}, our work will be extended to learn a non-linear deep neural network based image hash function while resorting to the semantic augment from contextual modalities.

\ifCLASSOPTIONcaptionsoff
  \newpage
\fi

\section*{Acknowledgment}
Heng Tao Shen is corresponding author.
The authors would like to thank the anonymous reviewers for their constructive and helpful suggestions.

\ifCLASSOPTIONcaptionsoff
  \newpage
\fi

\bibliographystyle{IEEEtran}
\bibliography{IEEEabrv,IEEETNN}

\begin{thebibliography}{10}
\providecommand{\url}[1]{#1}
\csname url@samestyle\endcsname
\providecommand{\newblock}{\relax}
\providecommand{\bibinfo}[2]{#2}
\providecommand{\BIBentrySTDinterwordspacing}{\spaceskip=0pt\relax}
\providecommand{\BIBentryALTinterwordstretchfactor}{4}
\providecommand{\BIBentryALTinterwordspacing}{\spaceskip=\fontdimen2\font plus
\BIBentryALTinterwordstretchfactor\fontdimen3\font minus
  \fontdimen4\font\relax}
\providecommand{\BIBforeignlanguage}[2]{{%
\expandafter\ifx\csname l@#1\endcsname\relax
\typeout{** WARNING: IEEEtran.bst: No hyphenation pattern has been}%
\typeout{** loaded for the language `#1'. Using the pattern for}%
\typeout{** the default language instead.}%
\else
\language=\csname l@#1\endcsname
\fi
#2}}
\providecommand{\BIBdecl}{\relax}
\BIBdecl

\bibitem{CBIRSurvey}
R.~Datta, D.~Joshi, J.~Li, and J.~Z. Wang, ``Image retrieval: Ideas,
  influences, and trends of the new age,'' \emph{ACM Comput. Surv.}, vol.~40,
  no.~2, pp. 5:1--5:60, 2008.

\bibitem{SPH}
Y.~Weiss, A.~Torralba, and R.~Fergus, ``Spectral hashing,'' in \emph{Proc.
  Advances in Neural Information Processing Systems (NIPS)}, 2008, pp.
  1753--1760.

\bibitem{SKLSH}
M.~Raginsky and S.~Lazebnik, ``Locality-sensitive binary codes from
  shift-invariant kernels,'' in \emph{Proc. Advances in Neural Information
  Processing Systems (NIPS)}, 2009, pp. 1509--1517.

\bibitem{DBLP:journals/sigpro/XieZPL16}
L.~Xie, L.~Zhu, P.~Pan, and Y.~Lu, ``Cross-modal self-taught hashing for
  large-scale image retrieval,'' \emph{Signal Processing}, vol. 124, pp.
  81--92, 2016.

\bibitem{AGH}
W.~Liu, W.~Jun, S.~Kumar, and S.-F. Chang, ``Hashing with graphs.'' in
  \emph{Proc. Int. Conf. Machine Learning (ICML)}, 2011, pp. 1--8.

\bibitem{CMFSIGMOD}
J.~Song, Y.~Yang, Y.~Yang, Z.~Huang, and H.~T. Shen, ``Inter-media hashing for
  large-scale retrieval from heterogeneous data sources,'' in \emph{Proc. ACM
  Int. Conf. Management of Data (SIGMOD)}, 2013, pp. 785--796.

\bibitem{ITQ}
Y.~Gong, S.~Lazebnik, A.~Gordo, and F.~Perronnin, ``Iterative quantization: A
  procrustean approach to learning binary codes for large-scale image
  retrieval,'' \emph{IEEE Trans. Pattern Anal. Mach. Intell.}, vol.~35, no.~12,
  pp. 2916--2929, 2013.

\bibitem{LCMH}
X.~Zhu, Z.~Huang, H.~T. Shen, and X.~Zhao, ``Linear cross-modal hashing for
  efficient multimedia search,'' in \emph{Proc. ACM Int. Conf. Multimedia
  (MM)}, 2013, pp. 143--152.

\bibitem{zhutkde}
L.~Zhu, J.~Shen, L.~Xie, and Z.~Cheng, ``Unsupervised visual hashing with
  semantic assistant for content-based image retrieval,'' \emph{IEEE Trans. on
  Knowl. and Data Eng.}, vol.~29, no.~2, pp. 472--486, 2017.

\bibitem{zhutycb}
------, ``Unsupervised topic hypergraph hashing for efficient mobile image
  retrieval,'' \emph{IEEE Trans. Cybern.}, pp. 1--14, 2016.

\bibitem{SGH}
Q.-Y. Jiang and W.-J. Li, ``Scalable graph hashing with feature
  transformation.'' in \emph{Proc. Joint Conf. Artificial Intelligence
  (IJCAI)}, 2015, pp. 2248--2254.

\bibitem{TIP2016binary}
F.~Shen, X.~Zhou, Y.~Yang, J.~Song, H.~T. Shen, and D.~Tao, ``A fast
  optimization method for general binary code learning,'' \emph{IEEE Trans.
  Image Process.}, vol.~25, no.~12, pp. 5610--5621, 2016.

\bibitem{zhuhashingmm}
L.~Zhu, Z.~Huang, X.~Chang, J.~Song, and H.~T. Shen, ``Exploring consistent
  preferences: Discrete hashing with pair-exemplar for scalable landmark
  search,'' in \emph{Proc. ACM Int. Conf. Multimedia (MM)}, 2017, pp. 726--734.

\bibitem{DBLP:journals/mta/XieZC16}
L.~Xie, L.~Zhu, and G.~Chen, ``Unsupervised multi-graph cross-modal hashing for
  large-scale multimedia retrieval,'' \emph{Multimedia Tools Appl.}, vol.~75,
  no.~15, pp. 9185--9204, 2016.

\bibitem{MFH}
J.~Song, Y.~Yang, Z.~Huang, H.~T. Shen, and J.~Luo, ``Effective multiple
  feature hashing for large-scale near-duplicate video retrieval,'' \emph{IEEE
  Trans. Multimedia}, vol.~15, no.~8, pp. 1997--2008, 2013.

\bibitem{MVLH}
X.~Shen, F.~Shen, Q.-S. Sun, and Y.-H. Yuan, ``Multi-view latent hashing for
  efficient multimedia search,'' in \emph{Proc. ACM Int. Conf. Multimedia
  (MM)}, 2015, pp. 831--834.

\bibitem{ALM}
M.~S. Bazaraa, \emph{Nonlinear Programming: Theory and Algorithms},
  3rd~ed.\hskip 1em plus 0.5em minus 0.4em\relax Wiley Publishing, 2013.

\bibitem{hashingsurveytwo}
J.~Wang, W.~Liu, S.~Kumar, and S.~F. Chang, ``Learning to hash for indexing big
  data: A survey,'' \emph{Proc. IEEE}, vol. 104, no.~1, pp. 34--57, 2016.

\bibitem{SKH}
W.~Liu, J.~Wang, R.~Ji, Y.~Jiang, and S.~Chang, ``Supervised hashing with
  kernels,'' in \emph{Proc. IEEE Int. Conf. Computer Vision and Pattern
  Recognition (CVPR)}, 2012, pp. 2074--2081.

\bibitem{SCM}
D.~Zhang and W.~Li, ``Large-scale supervised multimodal hashing with semantic
  correlation maximization,'' in \emph{Proc. AAAI Conf. Artificial Intelligence
  (AAAI)}, 2014, pp. 2177--2183.

\bibitem{SePH}
Z.~Lin, G.~Ding, M.~Hu, and J.~Wang, ``Semantics-preserving hashing for
  cross-view retrieval,'' in \emph{Proc. IEEE Int. Conf. Computer Vision and
  Pattern Recognition (CVPR)}, 2015, pp. 3864--3872.

\bibitem{LSRH}
K.~Li, G.~J. Qi, J.~Ye, and K.~A. Hua, ``Linear subspace ranking hashing for
  cross-modal retrieval,'' \emph{{IEEE} Trans. Pattern Anal. Mach. Intell.},
  vol.~39, no.~9, pp. 1825--1838, 2017.

\bibitem{DHCB}
V.~E. Liong, J.~Lu, G.~Wang, P.~Moulin, and J.~Zhou, ``Deep hashing for compact
  binary codes learning,'' in \emph{Proc. IEEE Int. Conf. Computer Vision and
  Pattern Recognition (CVPR)}, 2015, pp. 2475--2483.

\bibitem{shen2018tpami}
F.~Shen, Y.~Xu, L.~Liu, Y.~Yang, Z.~Huang, and H.~T. Shen, ``Unsupervised deep
  hashing with similarity-adaptive and discrete optimization,'' \emph{{IEEE}
  Trans. Pattern Anal. Mach. Intell.}, 2018.

\bibitem{Alexnet}
A.~Krizhevsky, I.~Sutskever, and G.~E. Hinton, ``Imagenet classification with
  deep convolutional neural networks,'' in \emph{Proc. Advances in Neural
  Information Processing Systems (NIPS)}, 2012, pp. 1097--1105.

\bibitem{LSMH}
X.~Lu, X.~Zheng, and X.~Li, ``Latent semantic minimal hashing for image
  retrieval,'' \emph{IEEE Trans. Image Process.}, vol.~26, no.~1, pp. 355--368,
  2017.

\bibitem{deepbit}
K.~Lin, J.~Lu, C.~Chen, and J.~Zhou, ``Learning compact binary descriptors with
  unsupervised deep neural networks,'' in \emph{Proc. IEEE Int. Conf. Computer
  Vision and Pattern Recognition (CVPR)}, 2016, pp. 1183--1192.

\bibitem{DBLP:conf/aaai/XieSZ16}
L.~Xie, J.~Shen, and L.~Zhu, ``Online cross-modal hashing for web image
  retrieval,'' in \emph{Proc. AAAI Conf. Artificial Intelligence (AAAI)}, 2016,
  pp. 294--300.

\bibitem{CMFIJCAI}
S.~Kumar and R.~Udupa, ``Learning hash functions for cross-view similarity
  search.'' in \emph{Proc. Joint Conf. Artificial Intelligence (IJCAI)}, 2011,
  pp. 1360--1365.

\bibitem{TIPCMFH}
G.~Ding, Y.~Guo, J.~Zhou, and Y.~Gao, ``Large-scale cross-modality search via
  collective matrix factorization hashing,'' \emph{IEEE Trans. Image Process.},
  vol.~25, no.~11, pp. 5427--5440, 2016.

\bibitem{Singh:2008:RLV:1401890.1401969}
A.~P. Singh and G.~J. Gordon, ``Relational learning via collective matrix
  factorization,'' in \emph{Proc. ACM Int. Conf. Knowledge Discovery and Data
  Mining (KDD)}, 2008, pp. 650--658.

\bibitem{DBLP:journals/tcyb/ZhuSJZX15}
L.~Zhu, J.~Shen, H.~Jin, R.~Zheng, and L.~Xie, ``Content-based visual landmark
  search via multimodal hypergraph learning,'' \emph{{IEEE} Trans.
  Cybernetics}, vol.~45, no.~12, pp. 2756--2769, 2015.

\bibitem{DBLP:journals/tmm/ZhuSJXZ15}
L.~Zhu, J.~Shen, H.~Jin, L.~Xie, and R.~Zheng, ``Landmark classification with
  hierarchical multi-modal exemplar feature,'' \emph{{IEEE} Trans. Multimedia},
  vol.~17, no.~7, pp. 981--993, 2015.

\bibitem{DBLP:journals/tcyb/ChangMYZH17}
X.~Chang, Z.~Ma, Y.~Yang, Z.~Zeng, and A.~G. Hauptmann, ``Bi-level semantic
  representation analysis for multimedia event detection,'' \emph{{IEEE} Trans.
  Cybernetics}, vol.~47, no.~5, pp. 1180--1197, 2017.

\bibitem{DBLP:journals/tnn/ChangY17}
X.~Chang and Y.~Yang, ``Semisupervised feature analysis by mining correlations
  among multiple tasks,'' \emph{{IEEE} Trans. Neural Netw. Learning Syst.},
  vol.~28, no.~10, pp. 2294--2305, 2017.

\bibitem{DBLP:journals/pami/ChangYYX17}
X.~Chang, Y.~Yu, Y.~Yang, and E.~P. Xing, ``Semantic pooling for complex event
  analysis in untrimmed videos,'' \emph{{IEEE} Trans. Pattern Anal. Mach.
  Intell.}, vol.~39, no.~8, pp. 1617--1632, 2017.

\bibitem{CHMIS}
D.~Zhang, F.~Wang, and L.~Si, ``Composite hashing with multiple information
  sources,'' in \emph{Proc. ACM Int. Conf. Information Retrieval (SIGIR)},
  2011, pp. 225--234.

\bibitem{MVSH}
S.~Kim, Y.~Kang, and S.~Choi, ``Sequential spectral learning to hash with
  multiple representations,'' in \emph{Proc. European Conf. on Computer Vision
  (ECCV)}, 2012, pp. 538--551.

\bibitem{6638233}
S.~Kim and S.~Choi, ``Multi-view anchor graph hashing,'' in \emph{Proc. Int.
  Conf. Acoustics, Speech and Signal Processing (ICASSP)}, 2013, pp.
  3123--3127.

\bibitem{7006770}
L.~Liu, M.~Yu, and L.~Shao, ``Multiview alignment hashing for efficient image
  search,'' \emph{IEEE Trans. Image Process.}, vol.~24, no.~3, pp. 956--966,
  2015.

\bibitem{zhuijcai}
L.~Zhu, J.~She, X.~Liu, L.~Xie, and L.~Nie, ``Learning compact visual
  representation with canonical views for robust mobile landmark search,'' in
  \emph{Proc. Joint Conf. Artificial Intelligence (IJCAI)}, 2016, pp.
  3959--3965.

\bibitem{7984879}
L.~Zhu, Z.~Huang, X.~Liu, X.~He, J.~Sun, and X.~Zhou, ``Discrete multi-modal
  hashing with canonical views for robust mobile landmark search,'' \emph{IEEE
  Trans. Multimedia}, 2017.

\bibitem{DGH}
W.~Liu, C.~Mu, S.~Kumar, and S.~Chang, ``Discrete graph hashing,'' in
  \emph{Proc. Advances in Neural Information Processing Systems (NIPS)}, 2014,
  pp. 3419--3427.

\bibitem{SDH}
F.~Shen, C.~Shen, W.~Liu, and H.~T. Shen, ``Supervised discrete hashing,'' in
  \emph{Proc. IEEE Int. Conf. Computer Vision and Pattern Recognition (CVPR)},
  2015, pp. 37--45.

\bibitem{CDOECVR}
Y.~Mu, W.~Liu, C.~Deng, Z.~Lv, and X.~Gao, ``Coordinate discrete optimization
  for efficient cross-view image retrieval,'' in \emph{Proc. Joint Conf.
  Artificial Intelligence (IJCAI)}, 2016, pp. 1860--1866.

\bibitem{CSBDSH}
W.~Kang, W.~Li, and Z.~Zhou, ``Column sampling based discrete supervised
  hashing,'' in \emph{Proc. AAAI Conf. Artificial Intelligence (AAAI)}, 2016,
  pp. 1230--1236.

\bibitem{Shi2016}
X.~Shi, F.~Xing, J.~Cai, Z.~Zhang, Y.~Xie, and L.~Yang, ``Kernel-based
  supervised discrete hashing for image retrieval,'' in \emph{Proc. European
  Conf. on Computer Vision (ECCV)}, 2016, pp. 419--433.

\bibitem{DBLP:conf/mm/ZhuSX15}
L.~Zhu, J.~Shen, and L.~Xie, ``Topic hypergraph hashing for mobile image
  retrieval,'' in \emph{Proc. ACM Int. Conf. Multimedia (MM)}, 2015, pp.
  843--846.

\bibitem{wall2003svd}
M.~Wall, A.~Rechtsteiner, and L.~Rocha, ``{Singular value decomposition and
  principal component analysis},'' \emph{A Practical Approach to Microarray
  Data Analysis}, pp. 91--109, 2003.

\bibitem{lin2010augmented}
Z.~Lin, M.~Chen, and Y.~Ma, ``The augmented lagrange multiplier method for
  exact recovery of corrupted low-rank matrices,'' \emph{arXiv preprint
  arXiv:1009.5055}, 2010.

\bibitem{STH}
D.~Zhang, J.~Wang, D.~Cai, and J.~Lu, ``Self-taught hashing for fast similarity
  search,'' in \emph{Proc. ACM Int. Conf. Information Retrieval (SIGIR)}, 2010,
  pp. 18--25.

\bibitem{DecisionTree}
G.~Lin, C.~Shen, and A.~van~den Hengel, ``Supervised hashing using graph cuts
  and boosted decision trees,'' \emph{{IEEE} Trans. Pattern Anal. Mach.
  Intell.}, vol.~37, no.~11, pp. 2317--2331, 2015.

\bibitem{Sivic03}
J.~Sivic and A.~Zisserman, ``{Video Google}: {A} text retrieval approach to
  object matching in videos,'' in \emph{Proc. IEEE Int. Conf. Computer Vision
  (ICCV)}, 2003, pp. 1470--1477.

\bibitem{DBLP:journals/iet-ipr/ZhuJZF14}
L.~Zhu, H.~Jin, R.~Zheng, and X.~Feng, ``Weighting scheme for image retrieval
  based on bag-of-visual-words,'' \emph{{IET} Image Process.}, vol.~8, no.~9,
  pp. 509--518, 2014.

\bibitem{JMLRLDA}
D.~M. Blei, A.~Y. Ng, and M.~I. Jordan, ``Latent dirichlet allocation,''
  \emph{J. Mach. Learn. Res.}, vol.~3, pp. 993--1022, 2003.

\bibitem{MMWIKI}
N.~Rasiwasia, J.~Costa~Pereira, E.~Coviello, G.~Doyle, G.~R. Lanckriet,
  R.~Levy, and N.~Vasconcelos, ``A new approach to cross-modal multimedia
  retrieval,'' in \emph{Proc. ACM Int. Conf. Multimedia (MM)}, 2010, pp.
  251--260.

\bibitem{MIRFLICKR}
M.~J. Huiskes and M.~S. Lew, ``The mir flickr retrieval evaluation,'' in
  \emph{Proc. ACM Int. Conf. Multimedia Information Retrieval (MIR)}, 2008, pp.
  39--43.

\bibitem{CIVRNUS}
T.-S. Chua, J.~Tang, R.~Hong, H.~Li, Z.~Luo, and Y.~Zheng, ``Nus-wide: A
  real-world web image database from national university of singapore,'' in
  \emph{Proc. ACM Int. Conf. Image and Video Retrieval (CIVR)}, 2009, pp.
  48:1--48:9.

\bibitem{IJCVSIFT}
D.~G. Lowe, ``Distinctive image features from scale-invariant keypoints,''
  \emph{Int. J. Comput. Vision}, vol.~60, no.~2, pp. 91--110, 2004.

\bibitem{jolliffe2002principal}
I.~Jolliffe, \emph{Principal component analysis}.\hskip 1em plus 0.5em minus
  0.4em\relax Wiley Online Library, 2002.

\bibitem{MLBE}
Y.~Zhen and D.-Y. Yeung, ``A probabilistic model for multimodal hash function
  learning,'' in \emph{Proc. ACM Int. Conf. Knowledge Discovery and Data Mining
  (KDD)}, 2012, pp. 940--948.

\end{thebibliography}

\begin{IEEEbiography}[{\includegraphics[width=1in,height=1.25in,clip,keepaspectratio]{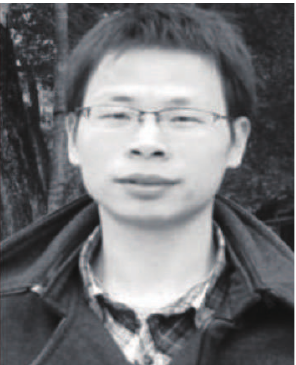}}]{Lei Zhu}
received the B.S. degree (2009) at Wuhan University of Technology, the Ph.D. degree (2015) at Huazhong University of Science and Technology. He is currently a full Professor with
the School of Information Science and Engineering, Shandong Normal University, China.  He was a Research Fellow under the
supervision of Prof. Heng Tao Shen at the University of Queensland (2016-2017), and Dr. Jialie Shen at the Singapore Management University (2015-2016).
His research interests are in the area of large-scale multimedia content analysis and retrieval.
\end{IEEEbiography}

\begin{IEEEbiography}[{\includegraphics[width=1in,height=1.25in,clip,keepaspectratio]{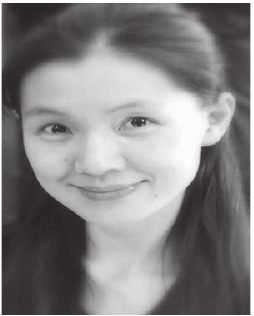}}]{Zi Huang}
received the B.Sc. degree from Tsinghua University, Beijing, China, in 2001, and the Ph.D.
degree in computer science from the University of Queensland, Brisbane, QLD, Australia, in 2004.
She is a Senior Lecturer and ARC Future Fellow with the School of Information Technology and Electrical Engineering,
University of Queensland. Her research interests
include multimedia search, social media analysis,
database, and information retrieval. She has authored
or coauthored papers that have been published
in leading conferences and journals, including ACM
Multimedia, ACM SIGMOD, IEEE ICDE, the IEEE Transactions ON Multimedia,
the IEEE Transactions on Knowledge and Data Engineering,
the ACM Transactions on Information Systems, and ACM Computing Surveys.
\end{IEEEbiography}
%

\begin{IEEEbiography}[{\includegraphics[width=1in,height=1.25in,clip,keepaspectratio]{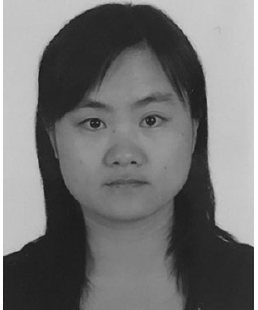}}]{Zhihui Li}
received the B.S. degree from Beijing University of Posts and Telecommunications in 2008. She is currently working as a research assistant at the School of Computer Science and Technology in Shandong University. After her graduation, she
has worked as a Data Analyst in Beijing Etrol Technologies Co., Ltd until December 2017. Her research interests include artificial intelligence, machine learning, and computer vision.
\end{IEEEbiography}

\begin{IEEEbiography}[{\includegraphics[width=1in,height=1.25in,clip,keepaspectratio]{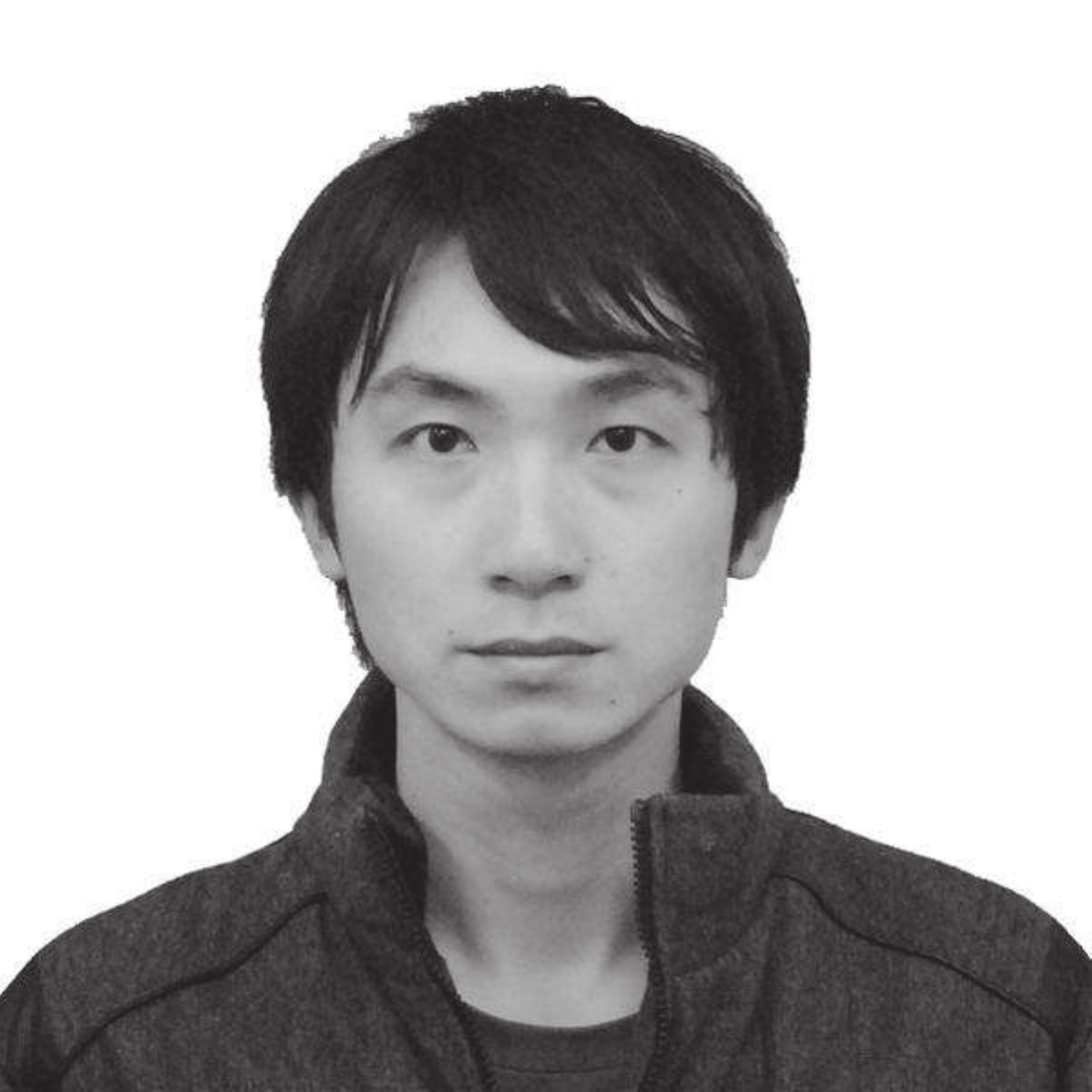}}]{Liang Xie}
received the B.S. degree from Wuhan University of Technology, China, in 2009, the Ph.D. degree from Huazhong University of Science and Technology, China, in 2015. He is currently an lecturer in the School of Science at Wuhan University of Technology. His current research interests include image semantic learning, cross-modal and multi-modal multimedia retrieval.
\end{IEEEbiography}
\begin{IEEEbiography}[{\includegraphics[width=1in,height=1.25in,clip,keepaspectratio]{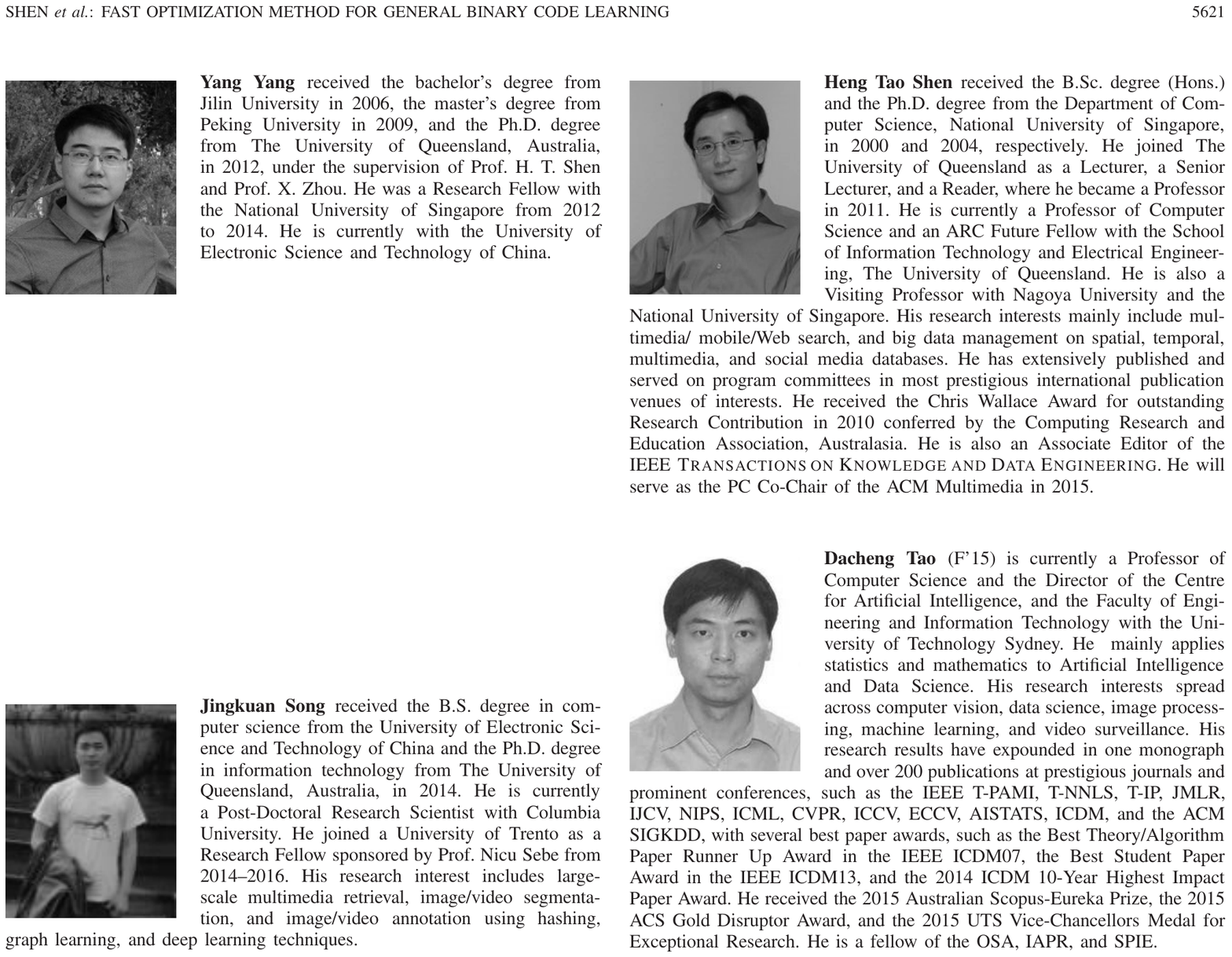}}]{Heng Tao Shen}
is currently a Professor of National "Thousand Talents Plan", the Dean of School of Computer Science and Engineering, and the Director of Center for Future Media at the University of Electronic Science and Technology of China. He is also an Honorary Professor at the University of Queensland. He obtained his BSc with 1st class Honours and PhD from Department of Computer Science, National University of Singapore in 2000 and 2004 respectively. He then joined the University of Queensland as a Lecturer, Senior Lecturer, Reader, and became a Professor in late 2011. His research interests mainly include Multimedia Search, Computer Vision, Artificial Intelligence, and Big Data Management. He has published 200+ peer-reviewed papers, most of which appeared in top ranked publication venues, such as ACM Multimedia, CVPR, ICCV, AAAI, IJCAI, SIGMOD, VLDB, ICDE, TOIS, TIP, TPAMI, TKDE, VLDB Journal, etc. He has received 6 Best Paper Awards from international conferences, including the Best Paper Award from ACM Multimedia 2017 and Best Paper Award - Honorable Mention from ACM SIGIR 2017. He has served as a PC Co-Chair for ACM Multimedia 2015 and currently is an Associate Editor of IEEE Transactions on Knowledge and Data Engineering.
\end{IEEEbiography}
\vfill
\end{document}